\def\dotsb{\ldots}
\newcommand{\phe}{\varphi}
\newcommand{\LEX}{\mathrm{LEX}}
\newcommand{\INC}{\mathrm{INC}}
\newcommand{\T}{\mathrm{T}}
\newcommand{\btt}{\mathrm{btt}}
\newcommand{\lex}{\mathrm{lex}}
\newcommand{\orr}{\quad \vee \quad}
\newcommand{\intersect}{\mathrel{\cap}}
\newcommand{\union}{\mathrel{\cup}}
\renewcommand{\complement}[1]{\overline{#1}}
\newcommand{\restr}{\mathrel{\upharpoonright\nolinebreak[4]\hspace{-0.65
ex}\upharpoonright}}
\newcommand{\lcode}{\left\langle}
\newcommand{\rcode}{\right\rangle}
\newcommand{\pair}[1]{{\lcode#1\rcode}}
\newcommand{\size}[1]{{\left|#1\right|}}
\theoremstyle{plain}        \newtheorem{thm}{Theorem}[section]
\theoremstyle{definition}   \newtheorem{defn}[thm]{Definition}
\theoremstyle{definition}   \newtheorem{nota}[thm]{Notation}
\theoremstyle{plain}        \newtheorem{prop}[thm]{Proposition}
\theoremstyle{plain}        \newtheorem{cor}[thm]{Corollary}
\theoremstyle{plain}        \newtheorem{lemma}[thm]{Lemma}
\theoremstyle{plain}        \newtheorem{ques}[thm]{Question}
\newcommand{\mc}{\mathcal}
\newcommand{\A}{\mc A}
\newcommand{\B}{\mc B}
\newcommand{\C}{\mc C}
\newcommand{\ind}{\mathrm{ind}}
\numberwithin{equation}{section}
 \definecolor{lightblue}{rgb}{.60,.60,1}
\begin{document}
\begin{frontmatter}
\title{Things that can be made into themselves}

\author[label1]{Frank Stephan\fnref{fn1}}
\ead{fstephan@comp.nus.edu.sg}
\fntext[fn1]{Frank Stephan was supported in part by NUS grant R252-000-420-112.}
\address[label1]{National University of Singapore}

\author[label2]{Jason Teutsch}
\ead{teutsch@cse.psu.edu}
\address[label2]{Penn State University}

\begin{abstract}
\noindent
One says that a property $P$ of sets of natural numbers can be made into
itself iff there is a numbering $\alpha_0,\alpha_1,\ldots$ of all left-r.e.\
sets such that the index set $\{e: \alpha_e$ satisfies $P\}$ has the property
$P$ as well. For example, the property of being Martin-L\"of random can
be made into itself.  Herein we characterize those singleton
properties which can be made into themselves.
A second direction
of the present work is the investigation of the structure of left-r.e.\ sets
under inclusion modulo a finite set.  
In contrast to the corresponding structure for r.e.\ sets, which has
only maximal but no minimal members,  both minimal and maximal
left-r.e.\ sets exist.  Moreover, our construction
of minimal and maximal left-r.e.\ sets greatly differs from Friedberg's
classical construction of maximal r.e.\ sets. 
Finally, we investigate whether the properties of minimal and maximal
left-r.e.\ sets can be made into themselves.
\end{abstract}

\begin{keyword}
numberings, self-reference, minimal left-r.e.\ sets, maximal left-r.e.\ sets, Martin-L\"{o}f random sets.
\end{keyword}
\end{frontmatter}


\section{Introduction}

\noindent
The roots of recursion theory entwine self-reference.
Even before Turing \cite{Tur36} formalised the notion of computation
using machines with infinite tapes and finite control structures,
mathematicians conceived of primitive recursive and recursive
functions \cite{Ack28,God31}. In particular, Kurt G\"odel captured
many aspects of computation in his formal system of arithmetic and
exploited its self-referential properties in the proof of his famous
incompleteness theorem \cite{God31}.  In order to show that the theory
of natural numbers does not have a consistent and
complete r.e.\ axiomatization, G\"odel created a first-order formula
which informally states, with respect to an underlying primitive-recursive
set of axioms,
\begin{quote}
\emph{``This statement is unprovable.''}
\end{quote}
so that neither the statement nor its negation has a mathematical
proof with respect to the given set of axioms.
G\"odel's ground-breaking construction contained various important
concepts including coding, or numbering, techniques.
For this reason the acceptable numberings are
also called, after him, \emph{G\"odel numberings}.
The expressive strength of a general-purpose computer language
is precisely what makes G\"odel's self-referential statement possible. 
Self-reference has manifested itself in computer science and
mathematics in the form of fixed point theorems, in particular
Kleene's Recursion Theorem \cite{Kle38}, Roger's Fixed-Point
Theorem~\cite[Theorem~11-I]{Rog67}, the Arslanov Fixed Point
Theorem~\cite{ANS77} and its generalizations~\cite{Ars81, Ars89,
JLSS89}, as well as other diagonalization methods~\cite{Odi89, Soa87}.
Today research continues
in the area of machine self-reference and self-knowledge~\cite{CM09}.

In recursion
theory one often studies effective listings of r.e.\ sets and partial-recursive
functions. On one hand there are the acceptable numberings introduced by
G\"odel \cite{God31}; on the other hand Friedberg \cite{Fri58} showed that
there are also one-one numberings of the above named objects where each item
occurs exactly once. In this paper, we look at self-reference in
terms of numberings of left-r.e.\ sets. Here a set $A$ is left-r.e.\ iff it
can be approximated by a uniformly recursive sequence of sets such that
$A$ is the lexicographic supremum of all these sets. Furthermore, a
left-r.e.\ numbering
$A_0,A_1,\ldots$ is an effective sequence of left-r.e.\ sets as defined
more precisely below (Definition~\ref{def:left-r.e.-numbering}).
Numberings for left-r.e.\ sets were first studied by
Brodhead and Kjos-Hanssen \cite{BK09,KST10} and
provide more expressive possibilities than the
traditional numberings for r.e.\ sets.

In this paper, we are
especially interested in classes $\cal C$ of sets such that
there is a numbering of all left-r.e.\ sets in which
the index set of the left-r.e.\ members of $\cal C$ is itself a
member of the class $\cal C$.  For some reason such things exist, and
we call this
phenomenon ``things that can be made into themselves.'' Here the
phrase ``can be made into themselves'' implicitly refers to the fact
that we permit ourselves the flexibility to choose the underlying
numbering of all left-r.e.\ sets for a desired purpose. If we would
follow the usual default and use only acceptable left-r.e.\ numberings,
this would prohibit many things from being made into themselves. Indeed Rice's
Theorem \cite{Ric53} holds for acceptable left-r.e.\ numberings and
therefore any non-trivial index set is many-one hard either for the
halting problem or its complement.

In 1958, Friedberg \cite{Fri58} constructed a maximal r.e.\ set, that
is, an r.e.\ set maximal under inclusion up to finite differences,
thereby bringing Post's program \cite{Odi89,Pos44,Soa87} to an abrupt halt.
Post \cite{Pos44} had wished to prove the existence of Turing incomplete
r.e.\ sets by building sets with sparse complements.  Maximal sets can be
Turing complete and have the thinnest possible complements for r.e.\
sets, so Friedberg's result shows that ``thinness'' alone cannot
achieve Turing incompleteness.  As independently discovered by
Friedberg \cite{Fri57} and Muchnik \cite{Soa87}, Turing incomplete r.e.\ sets do
exist by alternate methods.   In Section~\ref{sec:minmax} we
introduce the concept of maximal and minimal left-r.e.\ sets.  Unlike
the class of r.e.\ sets, which has only maximal sets, both minimal and
maximal left-r.e.\ sets exist (Theorem~\ref{thm:zulu}).  Maximal r.e.\
sets cannot be maximal left-r.e (Theorem~\ref{thm: maxsep}), and among
the minimal and maximal left-r.e.\ sets only singleton maximal
left-r.e.\ sets can be made into themselves (Theorem~\ref{thm:lowerfarm}). 

We shall show that the Martin-L\"of random sets and 1-generic sets
can be made into themselves (Corollary~\ref{randoms are random2} and
Corollary~\ref{cor: bobo}), though not at the same time
(Proposition~\ref{prop: ML 1-generic calamity}), whereas the r.e.,
co-r.e.\ and recursive sets each cannot be (Corollary~\ref{cor:
pigpong}).  We characterise the left-r.e.\ sets whose index sets can
be made equal to the set itself (Theorem~\ref{thm: bambam}) and
discuss the complexity of the inclusion problem for left-r.e.\
numberings (Theorem~\ref{thm:left-r.e. inclusion}).

\begin{nota} A \emph{numbering} $\phe$ of partial-recursive functions is a
mapping $e \mapsto \phe_e$ such that the induced mapping $\pair{e,x}
\mapsto \phe_e(x)$ is partial-recursive. $W_e^\phe$ denotes the domain of
$\phe_e$ and we may omit the superscript when it is clear from context.
We identify numbers in a one-one way with binary strings so that the
ordering of the numbers is translated into the length-lexicographic
ordering of the strings.  We use $\size{e}$ to denote the
length of the string $e$, and we shall appeal to the fact that
$\size{e} \leq 1+ \log e$ for all~$e>0$.

Let a machine $\psi$ be a partial-recursive mapping from strings
to strings. The complexity of $x$ with respect to $\psi$,
called $C_\psi(x)$ is the length of the
shortest input $y$ with $\psi(y) = x$. $\psi$ is called universal
iff its range contains all strings and for every further 
machine $\phe$ there is a constant $c$ such that
for all $y$ in the domain of $\phe$, $C_\psi[\phe(y)] \leq |y|+c$.
It the field of Kolmogorov complexity, one fixes some plain universal
machine and denotes with $C(x)$ the plain Kolmogorov complexity of $x$
with respect to this machine. Similarly, one can consider prefix-free
machines where a machine $\psi$ is prefix-free iff any two strings
in its domain, neither of the two is a proper prefix of the other one.
One can then define the prefix-free Kolmogorov complexity $H$
as above with respect to a fixed machine which is universal among
all prefix-free machines. Calude \cite{Cal94} and Li and Vit\'anyi \cite{LV08}
provide further background on Kolmogrov complexity.

Let $A \triangle B$ denote the symmetric difference
of $A$ and $B$, that is, $A \cup B - A \cap B$. Furthermore,
$\overline{A} = {\mathbb N}-A$ is the complement of the set $A$.
Furthermore, $A \subseteq^* B$ means that almost all elements of $A$
are also in $B$ and $A \subset^* B$ means that in addition to the previous,
there are infinitely many elements in $B-A$.
For finite strings $\sigma$ and $\tau$, $\sigma \cdot
\tau$ denotes concatenation of $\sigma$ and $\tau$, $\sigma
\sqsupseteq \tau$ means $\sigma$ extends $\tau$ and $\sigma
\sqsubseteq \tau$ means $\sigma$ is a prefix of $\tau$.  Similarly for
sets, $\sigma \sqsubseteq A$ means that $\sigma$ is a prefix of $A$
(where, as usual, the set $A$ is identified with the infinite sequence
$A(0)A(1)\ldots$ given by its characteristic function).
A set is recursively enumerable (or just \emph{r.e.}) iff it is either
empty or the range of a recursive function.
A set is \emph{co-r.e.}\ if it is the complement of an r.e.\
set, $'$ is the jump operator and $\equiv_\T$
is Turing equivalence.  We say $A$ is \emph{$B$-recursive} if $A
\leq_\T B$. $A \leq_\btt B$ if membership in $A$ can be decided by
uniformly constructing a Boolean formula over finitely many variables
and evaluating it using membership values from $B$.  For a set $A$, we
use $A \restr n$ to denote the prefix of $A$'s characteristic sequence
$A(0)A(1) \dotsb A(n)$.  A subset of
natural numbers is $\Pi^0_n$ if it can be described by a formula
consisting of $n$ alternating quantifiers, starting with a universal
quantifier, and ending with a recursive predicate. Furthermore,
a set is $\Sigma^0_n$ iff its complement is $\Pi^0_n$.

A set $A$ is called \emph{autoreducible} \cite{Tra70}
if for all $x$, whether $x$ is a member of $A$ can be effectively
determined by querying $A$ at positions other than $x$; a set $A$ is
called \emph{strongly infinitely-often autoreducible} \cite{Ars00} if there is
a partial-recursive function $\psi$ such that for all inputs of
the form $x = A(0)\ldots A(n-1)$, either $\psi(x)$ outputs $?$ or
$\psi(x)$ outputs $A(n)$ and the latter happens infinitely often;
note that there are strongly infinitely often autoreducible sets which
are not autoreducible.
For any numbering $\alpha$, the \emph{$\alpha$-index set} of a class $\C$
is the set $\{e: \alpha_e \in \C\}$.  For sets of nonnegative integers
$A$ and $B$, \emph{$A \leq_\lex B$} means that either $A = B$ or the
least element $x$ of the symmetric difference satisfies $x \in B$.  A
set $A$ is \emph{left-r.e.}\ iff there is a uniformly recursive
approximation $A_0,A_1,\ldots$ of $A$ such that $A_s \leq_{\lex}
A_{s+1}$ for all $s$. The symbol
$\oplus$ denotes join.  For further background on recursion theory and
left-r.e.\ sets, see the textbooks of Calude \citep{Cal94},
Downey and Hirschfeldt \citep{DH10}, Li and Vit\'anyi \citep{LV08},
Nies \citep{Nie09}, Odifreddi \citep{Odi89,Odi99}, Rogers \citep{Rog67}
and Soare~\citep{Soa87}.
\end{nota}

\noindent
The reader may already be familiar with left-r.e.\ reals, which admit
an increasing, recursive sequence of rationals from below, however in
the context of effective enumerations it makes more sense to consider
left-r.e.\ \emph{sets}, see \cite[Section~2]{KST10}.  For example, the
infinite left-r.e.\ sets have an left-r.e.\ numberings while the
coinfinite left-r.e.\ sets do not have one; if one would only consider
left-r.e.\ reals, the distinction between coinfinite and infinite sets
would disappear and so, for example, the coinfinite reals would have
a left-r.e.\ numbering. So the results depend a bit on the setting (sets
versus reals) and we decided to follow the more natural setting of sets
(as most of recursion theory does).

\begin{defn} \label{def:left-r.e.-numbering}
A \emph{left-r.e.\ numbering} $\alpha$ is a mapping
from natural numbers to left-r.e.\ sets given as the limits
of a uniformly recursive sequences in the sense
\[
e \mapsto \lim_{s \to \infty} \alpha_{e,s} = \alpha_e
\]
where the following two conditions hold:
\begin{enumerate}[\scshape (i)]
\item the mapping $e,s,n \mapsto \alpha_{e,s}(n)$ is recursive
  and $\{0,1\}$-valued;
\item $\alpha_{e,s} \leq_\lex \alpha_{e,s+1}$ for all $s$.
\end{enumerate}
A left-r.e.\ numbering is called \emph{universal} if
its range includes all left-r.e.\ sets, and a left-r.e.\ numbering
$\alpha$ is called an ($K$-)\emph{acceptable} left-r.e.\ numbering if
for every left-r.e.\ numbering $\beta$ there exists a ($K$-)recursive
function $f$ such that $\alpha_{f(e)} = \beta_e$ for all $e$.  Here
$K$ denotes the halting set.
\end{defn}

\noindent
Acceptable numberings permit an
effective means for coding any algorithm, so an example of an
acceptable numbering can be obtained by the functions defined in some
general purpose programming language where some adjustments in definitions
have to be made, for example, that variables take as values natural numbers
and that there is exactly one input and one output and that there are no
constraints on the size of the numbers stored in the variables; furthermore,
the program texts have to be identified with natural numbers coding them
and ill-formed programs just correspond to the everywhere undefined function.

\begin{defn}
We say that a class of sets $\C$ \emph{can be made into itself} if
there exists a
universal left-r.e.\ numbering $\beta$ such that
\[
\{e : \beta_e \in {\mc C}\} \in {\mc C}.
\]
\end{defn}

\noindent
Note that in this context there are classes $\mc C$ which can be made into
themselves and which do not entirely consist of left-r.e.\ sets.
This will be essential for various results; for example the
Martin-L\"of random sets can be made into themselves
(Corollary~\ref{randoms are random2}) while
the Martin-L\"of random left-r.e.\ sets cannot be made into
themselves (Proposition~\ref{prop:left-r.e. ML not=self}). Hence
permitting $\mc C$ to have members which are
not left-r.e.\ is often necessary and is also natural in the
case for many classes.

Our primary tool for making things into themselves will be indifferent
sets.  An indifferent set is a list of indices where membership in a
given set can change without affecting membership in some class.

\begin{defn}[Figueira, Miller and Nies \cite{FMN09}]
An infinite set $I$ is called \emph{indifferent for a set~$A$ with
respect to $\C$} if for any set $X$,
\[
X \triangle A \subseteq I \implies X \in \C.
\]
When the class $\C$ is clear from context, we may omit it.
\end{defn}

\section{Classes that can be made into themselves}

\noindent
We show that any class of nonrecursive sets which
either contains the Martin-L\"of random sets or
contains the weakly 1-generic sets can be made into itself.
Our proof relies crucially on co-r.e.\ indifferent sets which
are retraceable by recursive functions.

A set $A$ is called \emph{Martin-L\"of random} \citep{ML66,Sch71} if there
exists a constant $c$ such that for all $n$, $H(A \restr n) \geq n -
c$.  Intuitively, $A$ is random if every prefix of $A$ is
incompressible and therefore lacks a simple pattern.
Zvonkin and Levin~\cite{ZL70} and later Chaitin~\citep{Cha87} gave an
example of a left-r.e.\ Martin-L\"of random real
called~\emph{$\Omega$}.

Figueira, Miller and Nies \cite{FMN09} constructed indifferent sets
for the class of Martin-L\"of random sets.  One of their approaches
is to build indifferent sets for non-autoreducible sets.  While this
works for Martin-L\"of random sets, the technique does not
generalise to weaker forms of randomness because recursively random
sets may be autoreducible \cite{MM04}. On the other hand,
Franklin and Stephan \cite{FS12}
showed that every complement of a dense simple set is indifferent
with respect to Schnorr randomness for all Schnorr random sets.
The arguments in Lemma~\ref{K-recursive implies retraceable} and
Theorem~\ref{indifferent and retraceable for random} are also
essentially due to Figueira, Miller and Nies \cite{FMN09}, however we
find it useful to make explicit the property of retraceability.

\begin{defn}
A set $A = \{a_0,a_1,a_2 \ldots \}$ is \emph{retraceable} if there
exists a partial-recursive function $f$ satisfying $f(a_{n+1}) = a_n$
for all $n$ and $f(x) < x$ whenever $f(x)$ is defined.
A set $S$ is \emph{approximable} if there exists an $n$ and a
recursive function $f$ such that for any $x_1,\ldots,x_n$
with $x_1 < \dotsb < x_n$, $f(x_1,\ldots,x_n) \in \{0,1\}^n$ and
$f(x_1,\ldots,x_n)$ agrees with the characteristic vector
$(S(x_1), \ldots, S(x_n))$ in at least one place. More generally,
if agreement in not only one but $m$ places is required, we
say $S$ is \emph{$(m,n)$-recursive}, where $1 \leq m \leq n$.
\end{defn}

\begin{lemma} \label{K-recursive implies retraceable}
For every $K$-recursive function $f$, there exists a co-r.e.\ set $I =
\{i_0, i_1, i_2, \dotsc \}$ which is retraceable by a recursive
function and satisfies $f(n) < i_n < i_{n+1}$ for all $n$.
\end{lemma}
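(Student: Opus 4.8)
The plan is to build $I$ as the complement of an r.e.\ set that we enumerate in stages, interleaving two tasks: (1) enumerating into $\overline{I}$ enough small numbers so that the surviving elements $i_0 < i_1 < \dots$ grow fast enough to dominate $f$ in the required staggered way, and (2) arranging a recursive retrace function $g$ with $g(i_{n+1}) = i_n$ and $g(x) < x$ on its domain. Since $f$ is $K$-recursive, fix a recursive approximation $f_s$ with $\lim_s f_s(n) = f(n)$; the usual settling-time trick converts the $K$-oracle into a (non-uniform in time but recursive overall) enumeration. The key idea is to reserve, for each $n$, a \emph{block} of consecutive numbers as the pool of candidates for $i_n$, and to discard a candidate (enumerate it into $\overline I$) whenever a later approximation to $f(n)$ overtakes it; because $f_s(n)$ changes only finitely often, each block eventually stabilises on one surviving element, which becomes $i_n$.

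Concretely, I would process $n = 0, 1, 2, \dots$ and maintain, at stage $s$, tentative values $i_0^s < i_1^s < \dots < i_{s}^s$. Start with $i_n^0$ being, say, the $n$-th element of some fixed sparse recursive set leaving infinitely many gaps above each point. At stage $s+1$, check for the least $n \le s$ with $f_{s+1}(n) \ge i_n^s$ (or with $i_n^s \le i_{n-1}^{s+1}$ after earlier corrections); if such $n$ exists, enumerate $i_n^s$ (and all current tentative values $i_m^s$ for $m \ge n$) into $\overline I$ and re-pick fresh, large tentative values for all indices $\ge n$, always chosen from the still-available numbers and spaced so that $f_{s+1}(m) < i_m^{s+1} < i_{m+1}^{s+1}$ holds for the newly chosen ones and so that room remains for all future re-picks. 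Since for each $n$ the value $f_s(n)$ stabilises and only finitely many injuries below $n$ occur, $i_n^s$ reaches a limit $i_n$; then $\overline I = \{i_n : n \in \mathbb N\}$ is co-r.e.\ (its complement is exactly the set of numbers ever enumerated, which is r.e.), and $f(n) < i_n < i_{n+1}$ holds in the limit.

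For retraceability: since the tentative values are always enumerated in increasing order of index and re-picked from above, I can arrange that $i_n^s$ depends recursively and monotonically on the stage, and more importantly I can make the \emph{block structure} recursive — e.g.\ predetermine recursively a partition of $\mathbb N$ into infinite intervals $B_0 < B_1 < \dots$ and insist $i_n \in B_n$ for all $n$. Then define $g(x)$: given $x$, compute the (recursive) index $n$ with $x \in B_n$, and output the least element of $B_{n-1}$ that never gets enumerated into $\overline I$ — but that last clause is not recursive as stated, so instead I would have $g$ on input $x \in B_n$ simply wait (dovetailing the enumeration of $\overline I$) until the element of $B_{n-1}$ currently designated as $i_{n-1}$ appears to have settled in the sense that all of $B_n$ below the current $i_n$-candidate has been enumerated; more cleanly, define $g$ only on the actual elements $i_{n+1}$ and let it run the construction until it sees $i_{n+1}$ confirmed, at which point $i_n$ is also confirmed. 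Because confirmation is a finite event, $g$ is partial-recursive with $g(i_{n+1}) = i_n$ and $g(x) < x$ on its domain, as required.

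The main obstacle I anticipate is the bookkeeping that keeps the two goals compatible: every time we injure index $n$ we must re-pick $i_n, i_{n+1}, \dots$ from ever-higher numbers, and we must guarantee that each block $B_m$ still contains infinitely many numbers not yet enumerated into $\overline I$, so that a valid re-pick always exists and the blocks never run dry. Choosing the blocks $B_m$ with sufficiently fast-growing lengths (say $|B_m|$ infinite, or at least growing faster than any possible injury count, which we can bound recursively by the stage number) handles this, but verifying that the finitely-many-injuries argument goes through simultaneously for the $f$-domination clause and the within-block constraint is where the care is needed. Everything else — co-r.e.-ness, the inequalities $f(n) < i_n < i_{n+1}$, and the $g(x) < x$ condition — then falls out of the construction almost immediately.
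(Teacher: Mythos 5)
Your marker construction of $I$ itself is essentially the paper's: approximate $f$ recursively, and whenever the approximation to $f(n)$ changes, dump the current candidates for $i_n, i_{n+1},\ldots$ into the r.e.\ complement and re-pick larger ones; finitely many injuries per index give a co-r.e.\ set with $f(n) < i_n < i_{n+1}$. The gap is in the retraceability argument, and it is not just bookkeeping. First, the device of a predetermined recursive partition into blocks $B_n$ with $i_n \in B_n$ cannot work: a partition of $\mathbb{N}$ into infinite sets cannot consist of consecutive intervals, and if the $B_n$ are finite then $i_n \leq \max B_n$ is a recursive upper bound on $i_n$, which is incompatible with $i_n > f(n)$ once $f$ is a $K$-recursive function dominating every recursive function. (Your suggestion to bound the number of injuries ``recursively by the stage number'' does not help, because the stage at which $f_s(0),\ldots,f_s(n)$ settle is itself not recursively bounded.) Second, the fallback retrace function that ``runs the construction until it sees $i_{n+1}$ confirmed'' has no recursive confirmation criterion: at no finite stage can the construction certify that a current candidate will never be injured again, so as described $g$ either diverges on elements of $I$ or must commit at some stage you have not specified.

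The missing idea, which is the heart of the paper's proof, is to make the commitment point $x$ itself and to arrange the construction so that this is safe: whenever index $n$ is injured at stage $s+1$, push \emph{all} markers $i_k$ with $k \geq n$ to values $\geq s+1$. Then for any $x$ that remains in $I$ forever, the approximation $I_{x+1}$ already agrees with $I$ on $\{0,1,\ldots,x\}$ --- for if some $y \leq x$ left $I$ after stage $x+1$, that same injury would have pushed every marker sitting at or above $y$'s index past the current stage and hence dumped $x$ as well. Consequently $g(x) = \max\bigl(I_{x+1} \cap \{0,1,\ldots,x-1\}\bigr)$ (with $g(x)=i_0$ for $x \leq i_1$) is a \emph{total} recursive retrace function. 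Your construction can be repaired by adding the clause that re-picked values are at least the current stage number and replacing the block/confirmation machinery with this snapshot argument.
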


\begin{proof}
Let $\{f_s\}$ be a recursive approximation to $f$ satisfying $\max f_s
< s$.  We construct $I$ by a movable marker argument.  The set
\[
I_s =\{i_{0,s}, i_{1,s}, i_{2,s}, \dotsc \}
\]
will be a recursive approximation to $I$ at stage $s$.  Set $I_0 =
\omega$.  At stage $s+1$, choose the least $n$ satisfying $f_{s}(n)
\neq f_{s+1}(n)$ and enumerate sufficiently many elements into
$\complement{I}_{s+1}$ such that
 \begin{itemize}
\item For all $k \geq n$, $i_{k, s+1} \geq s+1$, and
\item For all $k < n$, $i_{k, s+1} = i_{k,s}$.
\end{itemize}
For each $n$, $\{f_t(n)\}$ settles in some stage $s_n+1$ and so
\begin{equation*}
i_n = i_{n, s_n} \geq s_n + 1 > f(n).
\end{equation*}
Furthermore, the recursive function
\[
g(x)=
\begin{cases}
i_0 &\text{if $x \leq i_1$, and} \\
\max I_{x+1} \intersect\{0, 1, 2, \dotsc x-1\} &\text{otherwise.}
\end{cases}
\]
witnesses that $I$ is retraceable because if $I$ differs from
$I_{x+1}$ at some index below $x+1$, then by construction $x \notin
I$.\end{proof}
\noindent
The set in Lemma~\ref{K-recursive implies retraceable} is retraced by
a total recursive function.
Hence there is a recursive function $h$ which maps $I$ surjectively
to the set of natural numbers. In the above case, one can also see
directly that such a $h$ exists, as one can choose $h$ as
$$
  h(x) = |I_{x+1} \cap \{0,1,\ldots,x\}|
$$
and then $h$ has the desired property $h(i_n) = n$. A set which is
retraceable by a recursive function is $(1,2)$-recursive \citep{ST10},
and therefore the set $I$ above is also approximable.

\begin{lemma} \label{K-recursive into itself}
Let $\cal C$ be a class of nonrecursive sets containing:
\begin{enumerate}[\scshape (i)]
\item  a $K$-recursive member $A$ with a co-r.e.\ and retraceable set
$I$ which is indifferent for $A$ with respect to $\cal C$ and
\item a left-r.e.\ set $X = \sup X_s$ such that all the recursive
approximations $X_s$ to $X$ satisfy $\sigma \cdot X_s \notin \C$ while
$\sigma \cdot X \in \C$ for all strings~$\sigma$.
\end{enumerate}
Let $\cal D$ be a superclass of $\cal C$ not containing any recursive
set.  Then there exists a $K$-acceptable universal left-r.e.\
numbering which makes $\cal D$ into itself.
\end{lemma}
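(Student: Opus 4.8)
The goal is to build, from the given data, a $K$-acceptable universal left-r.e.\ numbering $\beta$ so that the index set $\{e : \beta_e \in \D\}$ is itself a member of $\D$. Since $\D$ contains no recursive set, it suffices to arrange that this index set is (Turing-equivalent to, or literally equal to) a set already known to lie in $\D$ — the natural candidate being the set $A$ from hypothesis~(i), since $A$ is nonrecursive and $A \in \C \subseteq \D$. So the plan is: make the index set of $\D$ under $\beta$ equal to $A$ on the indifferent set $I$ for $A$, while controlling it exactly off $I$.

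First I would fix a $K$-acceptable universal left-r.e.\ numbering $\gamma$ to start from (one exists by the standard construction). The construction of $\beta$ proceeds by reserving the indices in $I = \{i_0, i_1, \dots\}$ as ``coding slots'' and the complement $\overline{I}$ as the place to embed a copy of $\gamma$ (this uses that $\overline I$ is r.e., so we can recursively list $\overline I = \{j_0, j_1, \dots\}$ and set $\beta_{j_k}$ to run $\gamma_k$). On a coding slot $i_n$, we want $\beta_{i_n}$ to be a left-r.e.\ set that lies in $\D$ if $n \in A$ and lies outside $\D$ if $n \notin A$. Here hypothesis~(ii) does the work: take $\sigma_n$ to be a prefix coding whether $n \in A$ (say of length $1$, or more simply: if $n \in A$ we let $\beta_{i_n}$ approximate $X$ via the $X_s$, which converges to a member of $\D$; if $n \notin A$ we let $\beta_{i_n}$ be the constant approximation equal to some fixed $X_s$, which by~(ii) is \emph{not} in $\C$, hence we must further ensure it is not in $\D$). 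To get ``not in $\D$'' rather than merely ``not in $\C$'', use that $\D$ contains no recursive set: $X_s$ is recursive, so $\sigma \cdot X_s$ is recursive for every $\sigma$, hence $\sigma\cdot X_s \notin \D$. Thus on slot $i_n$: if $n \in A$, $\beta_{i_n} = X \in \C \subseteq \D$ (with the prefix $\sigma$ empty), and if $n \notin A$, $\beta_{i_n} = X_{s_n}$ for some stage depending recursively on the approximation — but since $A$ is only $K$-recursive we cannot decide $n \in A$ recursively, so we must instead let $\beta_{i_n,s}$ approximate $X$ as long as the $K$-approximation still thinks $n \in A$, and ``freeze'' it at the current $X_s$-value once the approximation changes its mind to $n \notin A$; by~(ii) a frozen value is recursive and outside $\D$, and a never-frozen value equals $X \in \D$.

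The key verification is then: $\{e : \beta_e \in \D\} = \{j_k : \gamma_k \in \D\} \,\cup\, \{i_n : n \in A\}$. The indifferent set $I$ enters here: we need the whole index set to be in $\D$, and its restriction to $I$ is (a recursive copy of) $A$, while off $I$ it is the $\gamma$-index set of $\D$ — an arbitrary set we have no control over. This is where indifference saves us: because $I$ is indifferent for $A$ with respect to $\C$, any set $Y$ with $Y \triangle A \subseteq I$ lies in $\C$, hence in $\D$. We arrange the index map $n \mapsto i_n$ and $k \mapsto j_k$ so that the full index set $\{e:\beta_e\in\D\}$, viewed as a subset of $\mathbb N$, differs from $A$ only within $I$ — i.e.\ on $\overline I$ it agrees with $A$. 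Since $\overline I$ is computable and the $\gamma$-part of the index set sits on $\overline I$, we cannot expect it to equal $A\restr\overline I$ automatically; instead we \emph{pad}. Here the retraceability of $I$ is the essential tool (as in Lemma~\ref{K-recursive implies retraceable}): retraceability gives a recursive surjection $h$ from $I$ onto $\mathbb N$ with $h(i_n) = n$, and it lets us interleave the coding slots and the embedding slots so that the indices $j_k$ carrying $\gamma$-copies can each be matched to a nearby coding slot, allowing us to absorb the uncontrolled $\gamma$-behaviour into $I$.

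The main obstacle, and the place the argument requires real care, is reconciling two competing demands on $\overline I$: it must host a full copy of a universal numbering (to keep $\beta$ universal and $K$-acceptable), yet the $\D$-index set restricted to $\overline I$ must match $A\restr \overline I$ so that indifference applies. The resolution — and this is the delicate bookkeeping I would spell out — is to realise that we do not need the index set to agree with $A$ literally on $\overline I$; we need it to differ from \emph{some} member of $\D$ only on $I$. So we replace ``$A$'' in the indifference application by the set $A' = (\text{the true }\beta\text{-index set of }\D)$ and check $A' \triangle A \subseteq I$: on a slot $i_n$ we built $\beta_{i_n} \in \D \iff n \in A$, so $A'$ and $A$ agree there after pulling back along $h$; off $I$, i.e.\ on $\overline I$, we must force $A'$ to agree with $A$ — and \emph{that} is arranged by choosing which of the left-r.e.\ copies to place on $j_k$ depending on whether $j_k \in A$, again decided $K$-recursively via freezing, so that $\beta_{j_k} \in \D \iff j_k \in A$ while still $\beta_{j_k}$ ranges over a universal family as $k$ varies (possible because we may compose $\gamma$ with any $K$-recursive permutation and still have $K$-acceptability). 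Once $A' \triangle A \subseteq I$ is secured, indifference yields $A' \in \C \subseteq \D$, which is exactly the claim. Checking that $\beta$ remains a legitimate left-r.e.\ numbering (conditions (i)–(ii) of Definition~\ref{def:left-r.e.-numbering}) under the freezing operation, and that it stays $K$-acceptable and universal, is routine but must be stated.
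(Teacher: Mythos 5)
Your toolkit is the right one---coding $A$ on designated slots via hypothesis~(ii) with $K$-recursive ``freezing'', using the absence of recursive sets in $\D$ to push frozen values out of $\D$, and invoking indifference to absorb the uncontrolled part of the index set---but the architecture is inverted, and the patch you propose for the resulting problem does not work. You place the universal numbering $\gamma$ on $\overline{I}$ and the coding slots on $I$. As you yourself observe, indifference then requires the true index set $A'=\{e:\beta_e\in\D\}$ to satisfy $A'\triangle A\subseteq I$, i.e.\ to agree with $A$ on $\overline{I}$---precisely where the uncontrollable $\gamma$-copies live. Your fix is to route each $\gamma_k$ to a slot $j_k$ with $j_k\in A$ or $j_k\notin A$ according to whether $\gamma_k\in\D$, ``decided $K$-recursively via freezing''. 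But membership of $\gamma_k$ in $\D$ is not $K$-decidable, and in general not decidable at any fixed arithmetical level (for $\D$ the Martin-L\"of random sets this is exactly the index set whose complexity the whole construction is designed to tame), so no such $K$-recursive routing exists; without it you cannot simultaneously have $\beta_{j_k}\in\D\iff j_k\in A$ and universality on $\overline{I}$. Composing $\gamma$ with a $K$-recursive permutation does not help, because the permutation would need to know the $\D$-status of each index.

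The repair is to swap the roles of $I$ and $\overline{I}$, which is what the retraceability is actually for: the recursive retracing function yields a recursive $h$ with $h(i_n)=n$, so one sets $\beta_e=\alpha_{h(e)}$ for $e\in I$ (with $\alpha$ acceptable universal), placing the entire universal numbering---and hence all the uncontrolled $\D$-membership---inside $I$, where indifference tolerates arbitrary behaviour. The coding then goes on the r.e.\ set $\overline{I}$: when $e$ enters $\overline{I}$ at stage $s$, pick a prefix $\sigma_e>_\lex\alpha_{h(e),s}$ (needed to preserve $\leq_\lex$-monotonicity of the approximation at the moment of switching, a point your sketch also leaves unaddressed) and let $\beta_e=\sigma_e\cdot X$ if $e\in A$ and $\beta_e=\sigma_e\cdot X_s$ frozen at the largest stage with $A_s(e)=0$ otherwise. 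Then $A'\cap\overline{I}=A\cap\overline{I}$ exactly, so $A'\triangle A\subseteq I$ and indifference gives $A'\in\C\subseteq\D$; universality and $K$-acceptability follow since every left-r.e.\ set occurs as $\alpha_{h(e)}$ for some $e\in I$ and the map $n\mapsto i_n$ is $K$-recursive.
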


\begin{proof}
Let $i_0,i_1,i_2,\ldots$ be the elements of $I$ in ascending
order and let the numbering $\alpha_0,\alpha_1,\alpha_2,\ldots$ be an acceptable
universal left-r.e.\ numbering. Recall that there is a recursive
function $h$ with $h(i_n) = n$ for all $n$. Let $A_s$ be an approximation
of $A$ in the limit. Now define
$$
  \beta_e =
\begin{cases}
     \alpha_{h(e)} & \text{if $e \in I$,} \cr
     \sigma_e \cdot X_s      & \text{if $e \notin I$ and $s$ is the
largest stage
                     with $A_s(e)=0$ and} \cr
     \sigma_e \cdot X       & \text{if $e \notin I$ and $e \in A$.}
\end{cases}
$$
where $\sigma_e$ is a string chosen when $e$ is enumerated into the
complement of $I$ at some stage $s$ such that $\sigma_e >_\lex
\alpha_{h(e),s}$.  Each $\beta_e$ is left-r.e.\ because $h$ is recursive, the
complement of $I$ is r.e.\ and $\gamma = \sup_s \gamma_s$. Furthermore,
$\beta$ is a $K$-acceptable numbering as the mapping $n \mapsto i_n$
is $K$-recursive. For $e \notin I$, $\beta_e \in {\cal D}$ iff
$\beta_e \in {\cal C}$
iff $e \in A$.  As $I$ is indifferent for~$A$ with
respect to $\cal C$,
it follows that $\{e: \beta_e \in {\cal D}\}$ is in $\cal C$ and therefore
also in $\cal D$. So $\cal D$ is made into itself by the universal
left-r.e.\ numbering~$\beta$.
\end{proof}

\noindent
A set $A$ is called \emph{low} if $A' \equiv_\T K$ and $A$ is called
\emph{high} if $A' \geq_\T K'$.

\begin{thm} \label{indifferent and retraceable for random}
For every low Martin-L\"of random set $A$, there exists a co-r.e.\
set which is indifferent for $A$ with respect to the class of
Martin-L\"{o}f random sets
and retraceable by a recursive function.
\end{thm}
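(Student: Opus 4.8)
\noindent\emph{Proof plan.}
Following Figueira, Miller and Nies~\cite{FMN09}, recall why a sufficiently sparse perturbation of a Martin-L\"of random set is again random. Fix a universal Martin-L\"of test $U_1\supseteq U_2\supseteq\cdots$ with $\mu(U_m)\le 2^{-m}$; since $A$ is random there is an $m_0$ with $A\notin U_{m_0}$. One specifies the intended perturbation set $I=\{i_0,i_1,i_2,\ldots\}$ one element at a time, maintaining the invariant that for every finite $F\subseteq\{i_0,\ldots,i_{n-1}\}$ the set $A\triangle F$, being a finite variant of $A$, is still random and avoids $U_{m_0}$. When $i_n$ is chosen, the newly relevant sets are the finitely many $A\triangle(F\cup\{i_n\})$ with $F\subseteq\{i_0,\ldots,i_{n-1}\}$, and the verification in~\cite{FMN09} shows that every sufficiently large value of $i_n$ keeps all of them out of $U_{m_0}$. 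Any $X$ with $X\triangle A\subseteq I$ then has each initial segment equal to the corresponding initial segment of $A$ altered only on a finite subset of $\{i_k:i_k<n\}$, so the invariant forbids any prefix of $X$ from being a cylinder of $U_{m_0}$; hence $X\notin U_{m_0}$, so $X$ avoids $\bigcap_m U_m$ and is Martin-L\"of random. Thus such an $I$ is indifferent for $A$ with respect to the Martin-L\"of random sets.

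Lowness of $A$ is what makes this construction effective enough to route through Lemma~\ref{K-recursive implies retraceable}: for a fixed finite $F$, the statement ``$A\triangle F\in U_{m_0}$'' is $\Sigma^0_1(A)$, so the search for a safe lower bound on $i_n$ is recursive in $A'$, and $A'\equiv_\T K$ because $A$ is low. Extracting from the construction a $K$-recursive function $f$ that dominates these successive lower bounds and feeding it to Lemma~\ref{K-recursive implies retraceable}, we obtain a co-r.e.\ set $I$, retraceable by a recursive function, whose elements grow at least as fast as the verification above demands. Consequently $I$ is indifferent for $A$ with respect to the Martin-L\"of random sets and is simultaneously co-r.e.\ and recursively retraceable, which is the assertion. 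Alternatively, one may graft the $A'$-search for the next safe position directly onto the movable-marker construction used to prove Lemma~\ref{K-recursive implies retraceable}, replacing $A'$ by $K$.

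The delicate point is the matching in the second paragraph: the verification wants, at stage $n$, a bound that may depend on the actual values $i_0,\ldots,i_{n-1}$, whereas Lemma~\ref{K-recursive implies retraceable} only guarantees $i_n>f(n)$ for a function of $n$ alone. The plan is to maintain a quantitative strengthening of the invariant — for instance a rapidly shrinking bound on the conditional $U_{m_0}$-weight beyond the currently committed prefix — that leaves enough slack for a single $K$-recursive $f$ (equivalently, a single $K$-recursive rule for moving the markers) to re-establish it regardless of how large the earlier $i_k$ turned out to be. The randomness-theoretic content is exactly the argument of~\cite{FMN09}; the extra work is the bookkeeping that makes the resulting indifferent set co-r.e.\ and recursively retraceable.
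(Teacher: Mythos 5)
Your overall architecture matches the paper's: use lowness of $A$ to extract a $K$-recursive sparsity bound, feed it to Lemma~\ref{K-recursive implies retraceable}, and verify that any co-r.e.\ retraceable $I$ growing faster than that bound is indifferent. The reduction at the end of your first paragraph (every prefix of an $X$ with $X \triangle A \subseteq I$ is a prefix of some finite perturbation $A \triangle F$, so it suffices to keep all finite perturbations out of one test component) is also sound. The gap is the randomness-theoretic core. The claim that ``every sufficiently large value of $i_n$ keeps all of the sets $A \triangle (F \cup \{i_n\})$ out of $U_{m_0}$'' for a \emph{single fixed} level $m_0$ is asserted, not proved, and it is not the verification used by Figueira, Miller and Nies \cite{FMN09} or by the paper. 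The obvious measure estimate is useless here: the set of $Z$ some perturbation of which on $\{i_0,\ldots,i_{n-1}\}$ lands in $U_{m_0}$ has measure up to $2^n \mu(U_{m_0})$, so one fixed level cannot absorb unboundedly many flip positions by measure counting. Your proposed repair --- a rapidly shrinking bound on the conditional $U_{m_0}$-weight beyond the committed prefix --- is also not available in general: for an ML-random $A \notin U_{m_0}$ the conditional measure $\mu(U_{m_0} \mid [A \restr j])$ need not tend to $0$ (this is exactly the density-point problem for $\Pi^0_1$ classes), and even if it did, the resulting bound must depend on $n$ alone rather than on the actual values $i_0,\ldots,i_{n-1}$ produced inside Lemma~\ref{K-recursive implies retraceable}, a mismatch you correctly flag but leave unresolved.

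The paper closes precisely this gap by replacing the test-based verification with a compression argument. It sets $f(n) = \max\{m : H(A \restr m) \leq m + 3n\}$, which is total by the Ample Excess Lemma and $K$-recursive by lowness of $A$, and takes any co-r.e.\ retraceable $I$ with $i_n > f(n)$. If $N \triangle A \subseteq I$ and $N$ is non-random, then $A \restr m$ can be described from $N \restr m$ together with the roughly $k(m) = |I \cap \{0,\ldots,m-1\}|$ bits recording where the two sets differ; sparsity gives $k(m) \leq r(m)$, where $f[r(m)] < m \leq f[r(m)+1]$, so the description has length at most $m + 2r(m) + O(\log m)$, while the definition of $f$ forces $H(A \restr m) > m + 3r(m)$. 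This verification is uniform in $I$: it works for \emph{any} set whose $n$-th element exceeds $f(n)$, regardless of how large the earlier elements are, which is exactly what makes it compose with Lemma~\ref{K-recursive implies retraceable}. Your plan still owes an argument of comparable strength at this point.
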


\begin{proof}
Let $A$ be a low Martin-L\"of random set, for example
\begin{equation} \label{eqn:2xOmega}
A = \{x : 2x \in \Omega\}
\end{equation}
is Martin-L\"of random and low by van Lambalgen's Theorem
\citep{vL87} and \cite[Theorem~3.4]{DHMN05}, see also
\cite[Theorem~3.4.11]{Nie09}.  Then
\[
f(n) = \max \{m : H(A \restr m) \leq m + 3n\}
\]
is partial-recursive in $A$ and hence $K$-recursive.  By
Lemma~\ref{K-recursive implies retraceable}, there exists a co-r.e.
set $I$ which is retraceable by a recursive function and satisfies
\begin{equation} \label{eqn: random in dominates f(n)}
f(n) < i_n < i_{n+1}
\end{equation}
for all n.  Let $k(m)$ be the number such that
\[
i_{k(m)} < m \leq i_{k(m)+1},
\]
and let $r(m)$ be the number such that
\[
f[r(m)] < m \leq f[r(m)+1],
\]
which exists by Miller and Yu's
Ample Excess Lemma \cite{MY08}, see \cite[Corollary~6.6.2]{DH10}.  By
\eqref{eqn: random in dominates f(n)} we have $k(m) \leq r(m)$ for all
sufficiently large $m$; otherwise
\[
f[r(m)+1] < i_{r(m)+1} \leq i_{k(m)} < m,
 \]
which is impossible.   

Suppose that there were some Martin-L\"of non-random set $N$ such that $N
\triangle A \subseteq I$. We can code a prefix of the set $A$ given
sufficiently long prefixes for $N$ and $I$, and so for infinitely many $m$
\begin{align*}
H(A \restr m) &\leq H(N \restr m) + H[A(i_0)A(i_1)\dotsc A(i_{k(m)})]
+ 2\log m + O(1)  \\
 &< m + 2k(m) + 2\log  m + O(1) \\
 & \leq m + 2r(m) + 2\log  m + O(1).
\end{align*}
Here the additive $\log$ factor is used for
coding two implicit programs into a single string. On the other hand,
by the definition of $f$,
\[
H(A \restr m) >  m + 3r(m)
\]
for all $m$, a contradiction.  Therefore $I$ is indifferent for $A$.
\end{proof}

\noindent
We are now ready to prove that several classes can be made into themselves.
Since left-r.e.\ Martin-L\"of random sets exist \citep{Cha87,DH10},
the following result is immediate from Theorem~\ref{indifferent and
retraceable for random} and Lemma~\ref{K-recursive into itself}.

\begin{cor} \label{randoms are random2}
If a class $\C$ contains all Martin-L\"of random sets
and no recursive sets then $\C$ can be made into itself.
In particular, the classes of Martin-L\"of random sets, recursively
random sets, Schnorr random sets, Kurtz random sets, bi-immune sets, immune
sets, sets which are not strongly infinitely often autoreducible
and nonrecursive sets can be made into themselves.
\end{cor}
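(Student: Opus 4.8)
The plan is to derive Corollary~\ref{randoms are random2} directly from Lemma~\ref{K-recursive into itself} and Theorem~\ref{indifferent and retraceable for random}, so the real content is just checking that the hypotheses of the lemma are met for $\C' = \{\text{Martin-L\"of random sets}\}$ and the role of $\D$ is played by the given class $\C$. First I would fix a low Martin-L\"of random set $A$, for instance $A = \{x : 2x \in \Omega\}$ as in the proof of Theorem~\ref{indifferent and retraceable for random}; that theorem supplies a co-r.e.\ set $I$, retraceable by a recursive function, which is indifferent for $A$ with respect to the Martin-L\"of random sets. Since $A$ is $A$-recursive and $A$ is low, $A$ is $K$-recursive, so clause~(i) of Lemma~\ref{K-recursive into itself} is satisfied with $\C$ in the lemma instantiated as the class of Martin-L\"of random sets.

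For clause~(ii) I need a left-r.e.\ set $X = \sup_s X_s$ such that $\sigma\cdot X$ is Martin-L\"of random for every string $\sigma$ while $\sigma\cdot X_s$ is never Martin-L\"of random. Here I would take $X = \Omega$ (or the left-r.e.\ random set guaranteed by \cite{Cha87,DH10}): prepending a fixed finite string $\sigma$ to a Martin-L\"of random set yields a Martin-L\"of random set, since finite modifications and finite shifts change prefix complexity of prefixes only by an additive constant; and each approximation $X_s$ is eventually constant (equal to a recursive rational / finite-or-recursive set), hence recursive, hence $\sigma\cdot X_s$ is recursive and in particular not Martin-L\"of random. So clause~(ii) holds. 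Then, given any class $\C$ that contains all Martin-L\"of random sets and no recursive set, $\C$ is a superclass of the Martin-L\"of random sets containing no recursive set, so it may serve as the class ``$\D$'' in Lemma~\ref{K-recursive into itself}. The lemma then yields a $K$-acceptable universal left-r.e.\ numbering making $\C$ into itself, which is the assertion of the corollary.

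The second sentence of the corollary is then a matter of bookkeeping: I would simply observe that each listed class contains every Martin-L\"of random set and no recursive set. For recursively random, Schnorr random and Kurtz random this is because Martin-L\"of randomness implies all of these, and no recursive set is Kurtz random (the weakest of them). For bi-immune and immune sets one notes every Martin-L\"of random set is bi-immune while no recursive infinite set is immune; and for ``not strongly infinitely often autoreducible'' one uses that Martin-L\"of random sets have this property (a recursive guessing function succeeding infinitely often would contradict incompressibility) whereas every recursive set plainly is strongly infinitely often autoreducible. Nonrecursive sets trivially fit. Applying the already-proved first sentence to each of these classes finishes the proof.

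I do not expect a genuine obstacle here, since all the machinery is in the cited lemma and theorem; the one point requiring a little care is verifying clause~(ii) of Lemma~\ref{K-recursive into itself}, namely that the \emph{approximations} $X_s$ fall outside the class while the limit lies inside it — this is exactly where it matters that $\Omega$'s stagewise approximations are recursive rationals, and it is what forces $\C$ to exclude recursive sets. Everything else is routine closure facts about the randomness notions involved.
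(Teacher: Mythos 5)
Your proposal is correct and follows exactly the paper's intended route: the paper derives this corollary immediately from Lemma~\ref{K-recursive into itself} and Theorem~\ref{indifferent and retraceable for random}, using a left-r.e.\ Martin-L\"of random set (such as $\Omega$) for clause~(\textsc{ii}) just as you do. Your added verifications --- that every stagewise approximation $X_s$ is a recursive set and hence lies outside any class excluding recursive sets, and that each listed class contains all Martin-L\"of random sets and no recursive sets --- are precisely the routine checks the paper leaves implicit.
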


\noindent
See the usual textbooks on recursion theory and algorithmic randomness
for the definition of these notions \cite{DH10,LV08,Nie09,Odi89,Rog67,Soa87}
and the paper of Arslanov \cite{Ars00} for the definition of the strongly
infinite-often autoreducible sets.
It is also straightforward to make non-random sets into themselves via
an acceptable numbering:
Just enumerate the left-r.e.\ sets on the even indices and one fixed member
of the class on the odd indices. This numbering makes each class
containing all non-immune sets (plus perhaps some others) have a
non-immune index set.

\bigskip
\noindent
We now investigate self reference for the class of 1-generic sets, a
class of sets orthogonal to Martin-L\"of random sets with respect to
Baire category and measure.  A set of binary strings $A$ is called
\emph{dense} if for every string $\sigma$ there exists $\tau \in A$
extending $\sigma$.  A set is \emph{weakly 1-generic} if it has a
prefix in every dense r.e.\ sets of binary strings.  Furthermore $X$
is \emph{1-generic} if for every (not necessarily dense) r.e.\ set of
strings $W$, either $X$ has a prefix in $W$ or some prefix of $X$
has no extension in $W$.  Every 1-generic set is weakly
1-generic \citep{Nie09}.  The following result isolates and generalises
the main idea of \citep[Theorem~23]{JST11}.

\begin{thm} \label{thm:1-generic co-r.e. indifferent}
Every $K$-recursive 1-generic set $A$ has a co-r.e.\ indifferent set
which is retraceable by a recursive function.
\end{thm}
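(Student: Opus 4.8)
The plan is to imitate the structure of Theorem~\ref{indifferent and retraceable for random}, replacing the incompressibility argument with a genericity (density) argument. Let $A$ be $K$-recursive and $1$-generic. Using Lemma~\ref{K-recursive implies retraceable} applied to a suitable $K$-recursive function $f$, we obtain a co-r.e.\ set $I = \{i_0, i_1, \ldots\}$, retraceable by a total recursive function, with $f(n) < i_n < i_{n+1}$ for all $n$. The function $f$ should be chosen so that the gaps $[i_n, i_{n+1})$ are ``large enough'' relative to the effective density behaviour of $A$; concretely, $f(n)$ will bound the least stage/length needed for $A$ to meet (or be decided by) the $n$-th dense r.e.\ set of strings in a fixed effective listing $W_0, W_1, \ldots$ of all r.e.\ sets of strings. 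Since $A$ is $K$-recursive, ``has a prefix in $W_n$ or has a prefix with no extension in $W_n$'' is a $K$-decidable event with a $K$-computable witness length, so such an $f$ is $K$-recursive.

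Next I would verify indifference. Suppose $X \triangle A \subseteq I$; I must show $X$ is $1$-generic, i.e.\ for every r.e.\ set of strings $W_n$, either $X$ has a prefix in $W_n$ or some prefix of $X$ has no extension in $W_n$. The key point is that changing $A$ only at positions in $I$, which are spaced out by more than $f$, cannot destroy genericity requirements: by the choice of $f(n)$, the $n$-th requirement for $A$ is already decided by the prefix $A \restr f(n)$, and this initial segment lies entirely below $i_n$, hence below the first position where $X$ and $A$ are permitted to differ among the indices relevant to that requirement. So either $A \restr f(n)$ already has no extension in $W_n$ — in which case the same prefix of $X$ (equal to that of $A$ on $\{0,\ldots,f(n)\}$ if no element of $I$ intrudes, or otherwise handled by a short case analysis on the at most finitely many $I$-positions below $f(n)$) also witnesses the requirement — or $A$ has a prefix $\tau \in W_n$ with $|\tau| \le f(n)$, and again this short prefix is shared (up to the finitely many low $I$-positions) by $X$. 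The clean way to phrase this is: for all but finitely many $n$, the relevant prefix is below $i_0$; and for the finitely many exceptional requirements one argues directly, noting $1$-genericity is preserved under finite modification anyway, so we may simply absorb the finitely many low elements of $I$ into a finite modification.

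The main obstacle — and the step deserving the most care — is making precise the interaction between ``$A \restr f(n)$ decides the $n$-th requirement'' and the fact that $X$ agrees with $A$ below $i_n$ only off $I$; since $I$ may contain elements below $f(n)$ for small $n$, one cannot literally claim $X \restr f(n) = A \restr f(n)$. The resolution is that $I$ is co-r.e.\ and retraceable, so $I \cap \{0,\ldots,f(n)\}$ is finite and its size grows slowly; by choosing $f$ to grow fast enough (e.g.\ diagonalising against $|I_s \cap \{0,\ldots,s\}|$, available since $I$ is co-r.e.) we can guarantee that the $n$-th dense-requirement witness for $A$ actually appears below $i_0$ for every $n$ beyond a fixed finite threshold, so that $X$ and $A$ share that witness prefix exactly. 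For the finitely many remaining requirements, invariance of $1$-genericity under finite modification finishes the argument. This gives that $I$ is indifferent for $A$ with respect to the $1$-generic sets, and retraceability by a recursive function is inherited directly from Lemma~\ref{K-recursive implies retraceable}.
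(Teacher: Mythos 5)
There is a genuine gap at the heart of your argument: the step where you transfer the genericity witness from $A$ to $X$. You propose to choose $f$ so that the witness for the $n$-th requirement ``appears below $i_0$ for every $n$ beyond a fixed finite threshold,'' so that $X$ and $A$ share that witness prefix exactly. This is impossible: $i_0$ is a single fixed number, and for the r.e.\ set $W = \{\tau : |\tau| > N\}$ with $N > i_0$ the only way to satisfy the requirement is to exhibit a prefix of length greater than $N$, so witnesses of unbounded length are forced. Your fallback of ``diagonalising against $|I_s \cap \{0,\ldots,s\}|$'' when defining $f$ is also circular, since in your own setup (as in Lemma~\ref{K-recursive implies retraceable}) the set $I$ is constructed \emph{from} $f$, not the other way around. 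The underlying difficulty is real and cannot be waved away by finite-modification arguments: $X$ may differ from $A$ at \emph{all} of the infinitely many positions of $I$, so no single long prefix of $A$ is literally shared with $X$, and ``the requirement for $A$ is decided by $A \restr f(n)$'' says nothing about $X \restr f(n)$.

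The missing idea, which is how the paper proceeds, is to make the genericity of a \emph{segment} of $A$ robust against arbitrary changes in the bits preceding it. One defines $f$ so that for every $n$, every $e \leq f(n)$ and \emph{every} string $\sigma$ of length $f(n)$, the string $\sigma \cdot A[f(n)] \dotsb A[f(n+1)]$ satisfies requirement $R_e$; this is possible because, for each fixed $\sigma$ and $e$, the set $\{\tau : \sigma \cdot \tau \in W_e\}$ is r.e.\ and $A$ is 1-generic, and the resulting bound $c_{\sigma,e}$ is found with oracle $K$. Then one applies Lemma~\ref{K-recursive implies retraceable} to get $I$ with $i_n > f(2n)$, and a pigeonhole count shows infinitely many of the intervals $J_k = \{f(k)+1,\dotsc,f(k+1)\}$ are disjoint from $I$. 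On each such interval $X$ agrees with $A$, and whatever $X \restr f(k)$ happens to be plays the role of $\sigma$, so $X$ satisfies $R_e$ for all $e \leq f(k)$. The universal quantification over $\sigma$ is precisely what replaces your unachievable demand that $X$ and $A$ share the witness prefix; without it the proof does not go through.
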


\begin{proof}
Let $W_0, W_1, \dotsc$ be any enumeration of the r.e.\ sets, and let
$R_e$ denote the $e^\text{th}$ genericity requirement: \emph{$\rho$
satisfies $R_e$} if either some prefix of $\rho$ belongs to $W_e$ or
no proper extension of $\rho$ belongs to $W_e$.  First we show that
there exists a $K$-recursive function $f$ such that
\begin{multline*}
(\forall n)\: (\forall e \leq f(n))\: (\forall \sigma\in\{0,1\}^{f(n)})\: \\
\left[\text{$\sigma \cdot A[f(n)]A[f(n)+1] \dotsb A[f(n+1)]$ satisfies
$R_e$}\right].
\end{multline*}
For any given $\sigma$ and $e$, there must be some sufficiently long
segment of $A$, say $A(\size{\sigma})A(\size{\sigma}+1)\dotsb A(c_{\sigma,e})$,
satisfying $R_e$ since $W_e$ is an r.e.\ set and $A$ is 1-generic. 
Now let $f(0) = 0$ and
\[
f(n+1) = \max\{ c_{\sigma,e} : \size{\sigma},e \leq f(n)\}.
\]
$f$ can be computed using an $A$ and a halting set oracle, hence $f$
is $K$-recursive.  Now using Lemma~\ref{K-recursive implies
retraceable}, obtain a co-r.e. set $I$ which is retraceable by a
recursive function and satisfies $i_n > f(2n)$ for all $n$.  By the
pigeonhole principle, for every $n$ there exist at least $n$ intervals
below $f(2n)$ of the form
\[
J_k = \{f(k)+1, f(k)+2, \dotsc, f(k+1)\} \quad (k \leq 2n)
\]
which do not contain a member of $I$.  Hence $J_n \intersect I =
\emptyset$ for infinitely many $n$.  For any $B$ satisfying $A
\triangle B \subseteq I$, each
such $n$ witnesses that some
initial segment of $B$ satisfies $R_e$ for all $e \leq
f(n)$, hence $I$ is indifferent for $A$ with respect to the class of
1-generic sequences.
\end{proof}

\noindent
While a left-r.e.\ set cannot be 1-generic \citep{Nie09}, it can be
weakly 1-generic \citep{Soa87}.  This follows from the fact that a
1-generic set cannot compute a nonrecursive r.e.\ set \citep{Soa87}. 
Thus by Theorem~\ref{thm:1-generic co-r.e. indifferent} and
Lemma~\ref{K-recursive into itself}, we obtain the following result.

\begin{cor} \label{cor: bobo}
Any class of non-recursive sets containing the weakly 1-generic sets
can be made into itself.
\end{cor}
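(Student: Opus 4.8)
The plan is to apply Lemma~\ref{K-recursive into itself} with $\C$ taken to be the class of weakly $1$-generic sets and $\D$ the given class of nonrecursive sets. I would first record two elementary facts. No recursive set $Y$ is weakly $1$-generic, because the recursive --- hence r.e.\ --- set $\{\tau : \tau \not\sqsubseteq Y\}$ is dense yet contains no prefix of $Y$; in particular $\C$ consists of nonrecursive sets. Also, prepending a finite string $\sigma$ to a weakly $1$-generic set $X$ again yields a weakly $1$-generic set: given a dense r.e.\ set $D$ of strings, either some prefix of $\sigma$ already lies in $D$, or $\{\rho : \sigma \cdot \rho \in D\}$ is dense and r.e., whence $X$ has a prefix $\rho$ in it and $\sigma \cdot \rho \sqsubseteq \sigma \cdot X$ belongs to $D$. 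Since by hypothesis $\D$ contains every weakly $1$-generic set and no recursive set, $\D$ is a superclass of $\C$ of the form demanded by the lemma.

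To verify hypothesis~(i), let $A$ be a low $1$-generic set --- such sets exist by the standard finite-extension construction \citep{Nie09,Soa87} --- which is $K$-recursive since $A \leq_\T A' \equiv_\T K$. By Theorem~\ref{thm:1-generic co-r.e. indifferent}, $A$ has a co-r.e.\ set $I$, retraceable by a recursive function, that is indifferent for $A$ with respect to the $1$-generic sets. Because every $1$-generic set is weakly $1$-generic, this indifference upgrades at once to indifference for $A$ with respect to $\C$; and $A$, being $1$-generic, is itself weakly $1$-generic, hence a $K$-recursive member of $\C$. This gives~(i).

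For hypothesis~(ii) I would invoke the known fact that there is a left-r.e.\ weakly $1$-generic set $X$ \citep{Soa87}, and fix a uniformly recursive approximation $X_0 \leq_{\lex} X_1 \leq_{\lex} \dotsb$ with $X = \sup_s X_s$. For every string $\sigma$, the first paragraph gives $\sigma \cdot X \in \C$, while each $X_s$ is a recursive set, so $\sigma \cdot X_s$ is recursive and therefore not weakly $1$-generic, that is, $\sigma \cdot X_s \notin \C$. With (i) and (ii) in place, Lemma~\ref{K-recursive into itself} delivers a $K$-acceptable universal left-r.e.\ numbering making $\D$ into itself, which is the assertion of the corollary.

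I do not expect a substantive obstacle here; the argument is a direct application of the preceding lemma and theorem. The only points needing attention are the two cited existence statements --- a $K$-recursive $1$-generic set and a left-r.e.\ weakly $1$-generic set --- and the bookkeeping that reconciles the class used in Theorem~\ref{thm:1-generic co-r.e. indifferent} (the $1$-generic sets) with the class fed to Lemma~\ref{K-recursive into itself} (the weakly $1$-generic sets). Since indifference of a set for $A$ with respect to a given class is automatically inherited by every larger class, this reconciliation reduces to noting that the $1$-generic sets are contained in the weakly $1$-generic sets, which constitute $\C$, which in turn is contained in $\D$, and so causes no trouble.
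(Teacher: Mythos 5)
Your proposal is correct and follows essentially the same route as the paper: the paper likewise obtains the corollary by feeding Theorem~\ref{thm:1-generic co-r.e. indifferent} (applied to a $K$-recursive 1-generic set) and the existence of a left-r.e.\ weakly 1-generic set into Lemma~\ref{K-recursive into itself}. You merely spell out the hypothesis-checking (non-recursiveness of weakly 1-generic sets, closure under prepending strings, the upgrade of indifference from the 1-generic class to the weakly 1-generic class) that the paper leaves implicit.
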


\noindent
Day has thoroughly investigated indifferent sets for 1-generic
sets~\cite{Day11}. He showed that every 1-generic set has an
indifferent set which is itself 1-generic and also points out, as
follows from Theorem~\ref{thm:1-generic co-r.e. indifferent}, that
every $K$-recursive 1-generic set has a co-r.e.\ indifferent set.

\section{Things which cannot be made into themselves}

\noindent
In this section we show that there are many classes which cannot
be made into themselves. The easiest example is the class of all finite
sets as this class cannot have a finite index set.

\begin{thm} \label{thm:can't enumerate the non-r.e. sets}
There is no left-r.e.\ numbering for the non-r.e.\ left-r.e.\ sets. 
Similarly, there is no left-r.e.\ numbering for the non-recursive
left-r.e.\ sets.
\end{thm}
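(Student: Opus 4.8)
Here ``$\mathcal C$ has a left-r.e.\ numbering'' means there is a left-r.e.\ numbering whose range is exactly $\mathcal C$. The plan is to derive both statements from the single claim that \emph{every} left-r.e.\ numbering $\alpha$ omits some non-r.e.\ left-r.e.\ set $B$ (i.e.\ $B\neq\alpha_e$ for all $e$). Granting this, if the non-r.e.\ left-r.e.\ sets had a left-r.e.\ numbering $\alpha$, then $B$ would be a non-r.e.\ left-r.e.\ set outside the range of $\alpha$, a contradiction; and since this $B$ is non-r.e.\ it is in particular non-recursive, so the same $B$ refutes any left-r.e.\ numbering of the non-recursive left-r.e.\ sets. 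Thus the ``Similarly'' clause requires no extra work once $B$ is built.

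The tool that makes a left-r.e.\ target attackable is this: if $A_0\leq_\lex A_1\leq_\lex\cdots$ is any left-r.e.\ approximation, then for each $n$ the length-$n$ prefix $A_s\restr n$ is $\leq_\lex$-nondecreasing in $s$, hence (ranging over only finitely many values) eventually constant; the resulting stable prefixes cohere and their union is the characteristic sequence of $A=\sup_\lex A_s$. So although we never certify that $\alpha_e$'s approximation has settled, we may keep checking, for longer and longer $n$, whether the eventually-frozen length-$n$ prefix of $\alpha_{e,s}$ has caught up with a prefix to which $B$ has committed, and react when it does.

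I would then build $B$ by a finite-injury priority construction producing a recursive $\leq_\lex$-nondecreasing approximation, with requirements $N_e\colon B\neq\alpha_e$ and $P_e\colon B\neq W_e$; meeting all $P_e$ forces $B$ to be non-r.e. Each $N_e$ reserves a block of positions to the right of all higher-priority activity, keeps $B$'s committed prefix strictly $\leq_\lex$-above the current approximation $\alpha_{e,s}$ on that prefix, and whenever $\alpha_{e,s}$ catches up it raises $B$'s commitment by a $\leq_\lex$-increase confined to its own block (never touching bits owned by higher-priority requirements). Using that $\alpha_e$ is non-r.e.\ --- hence not cofinite, so $\alpha_e$ must exhibit a $0$ beyond any fixed point --- one argues that each $N_e$ needs to raise its commitment, and to enlarge its block, only finitely often; and $P_e$ needs to act at most once per incarnation. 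Hence every requirement acts only finitely often, is injured only finitely often, and is eventually met; the limit $B$ is left-r.e., non-r.e., and missing from the range of $\alpha$.

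The step I expect to be the main obstacle is exactly the soundness of the $N_e$-strategy: diagonalising against a left-r.e.\ set approached from below, with only its oscillating approximation in hand, while simultaneously keeping $B$'s global approximation recursive and $\leq_\lex$-monotone through all injuries, and proving the ``catch-up'' corrections terminate. The delicate quantitative points are that a correction confined to a block of bounded length can strictly $\leq_\lex$-increase that block only finitely often, and that non-cofiniteness of $\alpha_e$ bounds how far the block must be enlarged; one also needs to choose the reserved blocks and the reset conventions carefully so that every action on $B$ is genuinely a $\leq_\lex$-increase. (An alternative worth trying, given the machinery already developed, is to let $B=\Omega\cup E$ for a suitable r.e.\ set $E$ contained in a co-r.e.\ indifferent set for $\Omega$, so that $B$ is automatically Martin-L\"of random, hence non-r.e.; but the diagonalisation $B\neq\alpha_e$ still reduces to the same kind of prefix-watching argument.)
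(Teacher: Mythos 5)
Your top-level reduction coincides with the paper's: both claims follow once one exhibits, for every left-r.e.\ numbering $\alpha$ whose range contains no cofinite set, a single non-r.e.\ left-r.e.\ set $B$ outside the range (non-r.e.\ and non-recursive sets are in particular not cofinite, so one $B$ refutes both alleged numberings). Where you genuinely diverge is in how $B$ is built. You propose a finite-injury priority argument in which each $N_e$ watches the eventually-stable prefixes of $\alpha_{e,s}$ and answers every catch-up by a $\leq_{\lex}$-increase confined to a reserved block; the finiteness lemmas you single out --- a block of length $\ell$ admits at most $2^{\ell}$ lexicographic increases, and unboundedly many block enlargements would force $\alpha_e$ to be cofinite --- are exactly the right ones, so the strategy is sound, though the bookkeeping (injury, block re-placement, reset conventions) is as delicate as you anticipate. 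The paper sidesteps all of that machinery: it places the complement of $B$ on pairwise disjoint multiplicative columns, $\overline{B}=\{x_e\}$ with $x_e=2^e3^{d(e)}$, where $d(e)$ is the supremum of a recursive approximation to the $K$-recursive function $F(e)=\max\{\text{$e$-th non-element of }\alpha_i : i\le e\}$, bumped once more to toggle against $W_e$. Diagonalization against every $\alpha_e$ then comes for free from the sparseness of $\overline{B}$ relative to $F$ rather than from any direct prefix comparison, the columns never interact (so there is no injury), and left-r.e.-ness is immediate because each marker $x_{e,s}$ only moves rightward. In short, your route is correct but pays for its direct comparison with priority machinery that the paper's sparse-complement construction renders unnecessary; on the other hand, your prefix-watching template is the more generic one and the observation underlying it (length-$n$ prefixes of a left-r.e.\ approximation change at most $2^{n}$ times) is the same engine the paper runs, just packaged into the function $F$.
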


\begin{proof}
Assume $\alpha_0,\alpha_1,\ldots$ is a recursive enumeration containing
no cofinite set. It is now shown that there is also a non-r.e.\ left-r.e.\
set $B$ which differs from all $\alpha_e$. For this, let $F$ be
the $K$-recursive function such that $F(e)$ is the maximum of the
$e$-th non-elements in each of the sets $\alpha_0,\alpha_1,\ldots,\alpha_e$.
One builds $B$ such that the complement of $B$ consists of
elements $x_e = 2^e \cdot 3^{d(e)}$ where $d(e)$ is the supremum of all
$F_s(e)$ for a recursive approximation $F_s$ to $F$; furthermore,
whenever $x_e \notin W_{e,s} \wedge 3x_e \in W_{e,s}$
then $d(e)$ is incremented by $1$. Note that the latter is done only
once after $F(e)$ has converged and that the latter enforces that
$W_e(x_e) \neq B(x_e) \vee W_e(3x_e) \neq B(3x_e)$
so that $B$ is not an r.e.\ set.
It is easy to see that $B$ is a left-r.e.\ set; the reason is that
the definition of $d(e)$ permits to make an approximation $x_{e,s}$ to $x_e$ 
monotonically from below and that therefore the approximation
$B_s = \{y: \forall e\,[y \neq x_{e,s}]\}$ is a left-r.e.\ approximation
to $B$. Hence $\alpha_0,\alpha_1,\ldots$ can neither be the numbering
of all nonrecursive left-r.e.\ sets nor the numbering of all
non-r.e.\ left-r.e.\ sets.
\end{proof}

\noindent
Although somewhat disappointing, the next fact follows as a consequence.

\begin{cor} \label{cor: pigpong}
The r.e.\ sets, co-r.e.\ sets and recursive sets cannot be made into
themselves.
\end{cor}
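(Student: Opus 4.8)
The plan is to derive all three non-existence results from Theorem~\ref{thm:can't enumerate the non-r.e. sets}: in each case, ``being made into itself'' would supply a left-r.e.\ numbering of precisely the non-r.e.\ (respectively, non-recursive) left-r.e.\ sets, which that theorem forbids.

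I would begin with the recursive sets. If a universal left-r.e.\ numbering $\eta$ makes the recursive sets into themselves, then $V = \{e : \eta_e \text{ is not recursive}\}$ is the complement of a recursive set, hence itself recursive and, in particular, r.e.; it is nonempty because $\eta$ is universal and non-recursive left-r.e.\ sets exist (for instance $\Omega$). Fixing a recursive $g$ with range $V$, the map $n \mapsto \eta_{g(n)}$ satisfies conditions (i) and (ii) of Definition~\ref{def:left-r.e.-numbering}, so it is a left-r.e.\ numbering, and by universality of $\eta$ its range is exactly the class of non-recursive left-r.e.\ sets --- contradicting the second assertion of Theorem~\ref{thm:can't enumerate the non-r.e. sets}.

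For the co-r.e.\ sets I would first record the auxiliary fact that a left-r.e.\ set $A$ that is also co-r.e.\ must be recursive: a lex-increasing recursive approximation to $A$ witnesses that the real with binary expansion $A(0)A(1)\dotsb$ is approximable from below, while, since $\complement{A}$ is r.e., a decreasing recursive approximation to $A$ (begin with $\mathbb N$ and delete elements as they enter $\complement{A}$) witnesses that the same real is approximable from above; a real approximable from both sides is computable, and then $A$ is recursive. Hence for any universal left-r.e.\ numbering $\gamma$ the sets $\{e : \gamma_e \text{ is co-r.e.}\}$ and $\{e : \gamma_e \text{ is recursive}\}$ coincide; if $\gamma$ makes the co-r.e.\ sets into themselves this common set is co-r.e., so its complement is r.e., and the reindexing of the previous paragraph (again nonempty for the same reason) once more yields a left-r.e.\ numbering of all non-recursive left-r.e.\ sets --- a contradiction.

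The r.e.\ case takes real work, because the index set $S = \{e : \beta_e \text{ is r.e.}\}$ handed to us is only r.e., so its complement is merely co-r.e.\ and cannot be reindexed directly. Instead I would repair $\beta$ as follows. After replacing $\beta$ by its standard finite-approximation normal form --- truncating $\beta_{e,s}$ to its first $s$ bits, so that each $\beta_{e,s}$ is a finite set while the limits are unchanged --- define $\gamma$ so that, on index $e$, it copies $\beta_{e,s}$ until, and unless, $e$ enters $S$ at some stage $t$, and from stage $t$ onward outputs $\sigma_e \cdot B_s$ instead, where $\sigma_e$ is the string with $\beta_{e,t} = \sigma_e \cdot 0^\omega$ and $B$ is a fixed non-r.e.\ left-r.e.\ set with a standard approximation $B_s$. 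The common prefix $\sigma_e$ forces $\beta_{e,t} = \sigma_e \cdot 0^\omega \leq_\lex \sigma_e \cdot B_s$, so the switch respects condition (ii) and $\gamma$ is a genuine left-r.e.\ numbering; every $\gamma_e$ is non-r.e.\ (if $e \notin S$ because $\beta_e$ is, and if $e \in S$ because a finite prefix-shift of the non-r.e.\ set $B$ is non-r.e.); and the range of $\gamma$ is exactly the non-r.e.\ left-r.e.\ sets, since every such set is $\beta_e$ for some $e \notin S$. So $\gamma$ contradicts the first assertion of Theorem~\ref{thm:can't enumerate the non-r.e. sets}. I expect the main difficulties to be the bookkeeping for this switch --- in particular checking condition (ii) exactly at the switching stage --- and pinning down the co-r.e.-implies-recursive lemma for left-r.e.\ sets.
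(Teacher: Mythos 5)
Your proposal is correct and follows essentially the same route as the paper: all three cases are reduced to Theorem~\ref{thm:can't enumerate the non-r.e. sets}, and for the r.e.\ case you use exactly the paper's switch-to-a-fixed-non-r.e.-set construction (the paper's $\beta_e = \sigma\cdot X$ once $e$ enters the index set). The only cosmetic differences are that for the recursive and co-r.e.\ cases you reindex through a recursive enumeration of the (r.e.) complementary index set instead of repeating the switching construction, and that you spell out the fact that a left-r.e.\ co-r.e.\ set is recursive, which the paper simply asserts.
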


\begin{proof}
Suppose that $\alpha$ is a universal left-r.e.\ numbering which makes
the r.e.\ sets into themselves, and say the $\alpha$-index set of the
r.e.\ sets is $R$.  Let $X$ be any set which is left-r.e.\ but not
r.e., for example a left-r.e.\ Martin-L\"of random.  Now define a
left-r.e.\ numbering $\beta$ by
\[
\beta_e =
\begin{cases}
\alpha_e &\text{if $e \notin R$,} \\
\sigma\cdot X \text{ for some finite $\sigma$} &\text{otherwise.}
\end{cases}
\]
In detail, $\beta_e$ follows the enumeration of $\alpha_e$ until $e$
gets enumerated into $R$ (if this ever happens), at which point
$\beta$ switches to enumerating $X$.  Thus $\beta$ is an enumeration
of the non-r.e.\ left-r.e.\ sets, contrary to Theorem~\ref{thm:can't
enumerate the non-r.e. sets}.

Now, suppose that some universal left-r.e.\ numbering $\gamma$
makes the co-r.e.\ sets into themselves.  Let $Q$ be the
$\gamma$-index set of the co-r.e.\ sets, and  note that the class of
left-r.e.\ co-r.e.\ sets is the class of left-r.e.\ recursive sets.
By a construction analogous to the one for $\beta$ above, there exists a
left-r.e.\ numbering consisting of the left-r.e.\ sets with $\gamma$-indices
in $\complement{Q}$. This is an enumeration of all left-r.e.\ sets which
are non-recursive, contradicting Theorem~\ref{thm:can't enumerate the
non-r.e. sets}. Since $Q$ is also the index set of recursive sets,
the recursive sets cannot be made into themselves either.
\end{proof}

\noindent
Another example of what cannot be done is the following; the class
of left-r.e.\ Martin-L\"of random sets is quite natural and also
known as the class of $\Omega$-numbers \cite{CHKW01,KS01,MT13}.
The reason is that they can be represented as the halting probability
of some universal prefix-free Turing machine.

\begin{prop} \label{prop:left-r.e. ML not=self}
The left-r.e.\ Martin-L\"of random sets cannot be made into themselves.
\end{prop}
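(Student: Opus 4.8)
The plan is to argue by contradiction. Suppose $\beta$ is a universal left-r.e.\ numbering with $S := \{e : \beta_e \in \C\} \in \C$, where $\C$ is the class of left-r.e.\ Martin-L\"of random sets. Then $S$ is itself left-r.e.; fix a monotone recursive approximation $S = \sup_s S_s$. From $\beta$ together with this approximation I will build a left-r.e.\ numbering $\delta$ whose range is exactly the class of all non-Martin-L\"of-random left-r.e.\ sets. This contradicts the phenomenon behind Theorem~\ref{thm:can't enumerate the non-r.e. sets}: the diagonal set $B$ constructed there has such a thin complement that its limiting frequency of ones equals $1$, so $B$ is automatically non-random, and the argument adapts to rule out a left-r.e.\ numbering of the non-random left-r.e.\ sets as well.

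To build $\delta$, index it by pairs $\langle e,t_0\rangle$, thinking of $t_0$ as a guess for a stage past which the bit $S_s(e)$ has settled. If $S_{t_0}(e)=1$ (the guess asserts $\beta_e\in\C$), put $\delta_{\langle e,t_0\rangle}=\emptyset$ at every stage. If $S_{t_0}(e)=0$, let $\delta_{\langle e,t_0\rangle}$ be $\emptyset$ up to stage $t_0$ and thereafter copy the approximation of $\beta_e$, i.e.\ $\delta_{\langle e,t_0\rangle,t}=\beta_{e,t}$ for $t\geq t_0$, for as long as $S_t(e)=S_{t_0}(e)$; and if there is a least stage $t_1>t_0$ with $S_{t_1}(e)\neq S_{t_0}(e)$, then from stage $t_1$ on freeze $\delta_{\langle e,t_0\rangle}$ at the recursive stage-value $\beta_{e,t_1}$. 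Each such approximation is lexicographically monotone (first $\emptyset$, then the monotone approximation of $\beta_e$, then eventually constant), so each $\delta_{\langle e,t_0\rangle}$ is left-r.e.; and each is non-random, being either $\emptyset$, a recursive stage-value $\beta_{e,t_1}$, or $\beta_e$ for an index $e$ with $S(e)=0$, i.e.\ with $\beta_e\notin\C$. Conversely, if $Z$ is any non-random left-r.e.\ set then by universality $Z=\beta_e$ for some $e\notin S$, and taking $t_0$ to be a stage past which $S_s(e)$ has settled (necessarily to $0$) gives $\delta_{\langle e,t_0\rangle}=Z$. Hence $\delta$ enumerates precisely the non-Martin-L\"of-random left-r.e.\ sets.

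The remaining step, which I expect to be the only delicate one, is that no such numbering can exist. The construction in the proof of Theorem~\ref{thm:can't enumerate the non-r.e. sets} turns any left-r.e.\ enumeration \emph{containing no cofinite set} into a left-r.e.\ set $B$ differing from every listed set, and there the complement of $B$ grows at least exponentially, so $B$ has limiting frequency of ones equal to $1$; since every Martin-L\"of random set has limiting frequency $1/2$, this $B$ is non-random, hence is a non-random left-r.e.\ set absent from the list. The new feature for the class of non-random left-r.e.\ sets is that such a list does contain cofinite sets (indeed all recursive sets); but a co-infinite $B$ differs from every cofinite set automatically, so the auxiliary $K$-recursive function only needs to reckon with the non-elements that the co-infinite members produce, and one checks this can still be arranged $K$-recursively, after which the diagonalisation carries over. (Alternatively one could exploit that $S\in\C$ is Turing complete, by Ku\v{c}era and Slaman~\citep{KS01}, but the route above is self-contained.) Therefore the hypothesised $\beta$ does not exist, and $\C$ cannot be made into itself.
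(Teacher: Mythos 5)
Your reduction in the second paragraph is sound and is a genuinely different opening move from the paper's proof, which simply observes that if the class were made into itself then the index set of the non-randoms would be $\Delta_2$ in that numbering and invokes the theorem of Kjos-Hanssen, Stephan and Teutsch \cite{KST10} on the arithmetical complexity of that index set. Your numbering $\delta$ does have range exactly the non-random left-r.e.\ sets: every value is $\emptyset$, a recursive stage-set $\beta_{e,t_1}$, or some $\beta_e$ with $e\notin S$, and every non-random left-r.e.\ set is recovered by choosing $t_0$ past the settling time of the bit $S_s(e)$. So everything rests on your final step, and that is where there is a genuine gap.

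The adaptation of the diagonalisation of Theorem~\ref{thm:can't enumerate the non-r.e. sets} does not go through as described. That construction needs $F(e)=\max\{e\text{-th non-element of }\alpha_i : i\le e\}$ to be $K$-recursive, and this uses the hypothesis that no $\alpha_i$ is cofinite: only then does the search for the $e$-th non-element of $\alpha_i$ (carried out with oracle $K$, each $\alpha_i$ being uniformly $\Delta^0_2$) terminate. Your list contains every cofinite set. The repair you propose, restricting the maximum to the co-infinite members, requires answering for each $i\le e$ the question ``does $\delta_i$ have at least $e+1$ non-elements?'', which is $\Sigma^0_2$ and not decidable relative to $K$; so the restricted $F$ is a priori only $K'$-recursive. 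If you instead take the naive recursive approximation (the apparent $e$-th non-element of $\delta_{i,s}$ at stage $s$), its supremum $d(e)$ can be infinite whenever some $\delta_i$ with $i\le e$ is cofinite, so the marker $x_e$ drifts to infinity and the corresponding requirement is never met. Thus ``one checks this can still be arranged $K$-recursively'' is precisely the assertion that fails, and no straightforward re-indexing rescues it. The non-enumerability statement you need is in fact true, but the available route to it is again the complexity theorem of \cite{KST10} (a numbering of the non-randoms would force their index set in any universal numbering down to $\Sigma^0_3$); if you patch your last step that way, your argument becomes a longer detour to the paper's two-line proof.
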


\begin{proof}
If the left-r.e.\ Martin-L\"of random reals could be made into
themselves, then the set of indices for Martin-L\"of non-random
reals would be $\Delta_2$ inside this numbering.  This contradicts a
theorem of Kjos-Hanssen, Stephan, and Teutsch \citep{KST10} which says
that the Martin-L\"of non-randoms are never $\Pi^0_3$ in any
universal left-r.e.\ numbering.
\end{proof}

\noindent
We remark that any set that can be made into itself via an acceptable
numbering contains
an infinite recursive subset by the Padding Lemma \citep{Odi89,Rog58}.
This means that the Martin-L\"of randoms, the recursively random sets,
the Schnorr randoms, the Kurtz randoms, the bi-immune sets, and immune
sets cannot be made into themselves using an acceptable numbering. 
Figueira, Miller and
Nies \cite{FMN09} asked whether Chaitin's $\Omega$ can have an
infinite co-r.e.\ indifferent set.  A partial solution to this problem
follows immediately from the Lemma~\ref{K-recursive into itself} and
Proposition~\ref{prop:left-r.e. ML not=self}: if such a co-r.e.\
indifferent set exists, it cannot be retraceable by a recursive function.

In contrast to Proposition~\ref{prop:left-r.e. ML not=self}, every
acceptable numbering of the left-r.e.\ reals makes the autoreducible
reals into themselves as the resulting index set is a
cylinder and thus autoreducible; the same applies for the notion
of strongly infinitely-often autoreducible sets as cylinders have
also that property. Note that not every set is
autoreducible, for example Martin-L\"of random reals fail to be
autoreducible \citep{FMN09,Tra70}. By Corollary~\ref{randoms are
random2}, the non-autoreducible reals can also be made into
themselves, but by the above comment they cannot be made into
themselves via an acceptable numbering.

\section{Singleton classes}

\noindent
For the case  of singletons, we can characterise which things can be
made into themselves.

\begin{thm} \label{thm: bambam}
A left-r.e.\ class $\{A\}$ can be made into itself iff $A \neq \emptyset$
and there exists an infinite, r.e.\ set $B$ such that $A \intersect B
= \emptyset$.
\end{thm}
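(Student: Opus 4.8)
Write $\mc C=\{A\}$, so that ``$\mc C$ can be made into itself'' means: there is a universal left-r.e.\ numbering $\beta$ with $\{e:\beta_e=A\}=A$.

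\emph{Necessity.} Given such a $\beta$: since $\emptyset$ and $\mathbb{N}$ are left-r.e.\ they are values of $\beta$, which forces $A\ne\emptyset$ (otherwise the index of $\emptyset$ would lie in $\{e:\beta_e=A\}=A=\emptyset$) and $A\ne\mathbb{N}$ (otherwise $\{e:\beta_e=A\}=\mathbb{N}$ makes every $\beta_e=\mathbb{N}$, contradicting universality). Let $n^*$ be least with $n^*\notin A$, so $A(m)=1$ for $m<n^*$ and $A(n^*)=0$. I claim $B=\{e:\beta_e(m)=1\text{ for all }m\le n^*\}$ is the required set. It is disjoint from $A$: for $e\in B$ the set $\beta_e$ agrees with $A$ on $\{0,\dots,n^*-1\}$ but differs at $n^*$, so $\beta_e\ne A$ and hence $e\notin A$. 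It is r.e.: position $0$ of $\beta_e$ is monotone in the approximation $\beta_{e,s}$, so ``$\beta_e(0)=1$'' is semidecidable; once that is confirmed position $1$ is thereafter monotone, so ``$\beta_e(1)=1$'' is semidecidable from then on; iterating up through position $n^*$ yields a $\Sigma^0_1$ definition of $B$. And $B$ is infinite, since infinitely many left-r.e.\ sets contain $\{0,\dots,n^*\}$ and each has a distinct $\beta$-index.

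\emph{Sufficiency, setup.} Fix $A\ne\emptyset$, an infinite r.e.\ $B$ with $A\cap B=\emptyset$ (so $A\ne\mathbb{N}$), a universal acceptable left-r.e.\ numbering $\alpha$, and a recursive bijection $b:\mathbb{N}\to B$. If $A$ is recursive one is done quickly: take $B=\overline A$, set $\beta_e=A$ for $e\in A$, and on the decidable set $\overline A$ distribute the sets $\tau\cdot\alpha_k$ over all pairs $(\tau,k)$ with $\tau$ not a prefix of $A$ (a decidable condition); since every left-r.e.\ $Z\ne A$ has the form $\tau\cdot\alpha_k$ with $\tau\not\sqsubseteq A$, this is universal with $A$-index set exactly $A$. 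So assume $A$ non-recursive. From a fixed $\lex$-increasing recursive approximation of $A$ form a \emph{slow} one $A_s$, whose value at position $n$ equals the underlying value when $n<s$ and is $0$ otherwise; a short case check shows $A_s$ is again a $\lex$-increasing recursive approximation to $A$, now with position $n$ ``revealed'' only at stage $n+1$.

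\emph{Sufficiency, construction and verification.} Each index $e$ starts in \emph{$A$-mode}: $\beta_{e,s}=A_{t^*}$ where $t^*$ is the largest $t\le s$ with $A_t(e)=1$, and $\beta_{e,s}=\emptyset$ if there is none; this is $\lex$-increasing (it freezes, rather than decreasing, when $A_s(e)=0$) with limit $A$ when $e\in A$ and with limit $\emptyset$ or some $A_{t^*}$ — a recursive set, hence $\ne A$ — otherwise. We leave $A$-mode only for an $e$ enumerated into $B$ by stage $e$, i.e.\ with $b^{-1}(e)\le e$; at that stage position $e$ is still unrevealed, so $\beta_{e,\cdot}$ still equals $\emptyset$, and we \emph{activate} $e$ by assigning it the next unused pair $(\tau,k)$ from a recursive list of all (string, index) pairs and putting it in \emph{$\alpha$-mode}, $\beta_{e,s}=\tau\cdot\alpha_{k,u^*}$ where $u^*$ is the largest stage $u$ since activation at which $\tau$ differs from the first $|\tau|$ bits of $A_u$ ($\beta_{e,s}=\emptyset$ if there is no such $u$), which is again $\lex$-increasing. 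Infinitely many $e\in B$ are activated: otherwise $b^{-1}(e)>e$ for cofinitely many $e\in B$, and then for a suitable $N_0$ each of the $N_0{+}1$ values $0,\dots,N_0$ equals $b^{-1}(e)$ for some $e\in B$ with $e<N_0$ — too few such $e$. Now verify the two requirements. If $\tau\not\sqsubseteq A$, the first $|\tau|$ bits of $A_u$ are eventually permanently $\ne\tau$, so the $(\tau,k)$-mode eventually runs unsuspended and converges to $\tau\cdot\alpha_k\ne A$; as $(\tau,k)$ ranges over all pairs with $\tau\not\sqsubseteq A$ these are precisely the left-r.e.\ sets other than $A$, which together with the copies of $A$ on the indices of $A$ makes $\beta$ universal. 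And $\{e:\beta_e=A\}=A$: for $e\notin B$ we have $\beta_e=A\iff e\in A$; for $e\in B$ we have $\beta_e\ne A$ in every case — an activated $e$ with $\tau\not\sqsubseteq A$ has $\beta_e\sqsupseteq\tau\not\sqsubseteq A$; an activated $e$ with $\tau\sqsubseteq A$ has the $(\tau,k)$-mode suspend permanently and freeze at $\emptyset$ or at some recursive $\tau\cdot\alpha_{k,u}$; a never-activated $e\in B$ stays frozen in $A$-mode at $\emptyset$ or a recursive set — all $\ne A$ since $A$ is non-recursive and nonempty.

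\emph{Expected difficulty.} The forward direction is routine. The work is entirely in the backward construction: keeping every approximation $\lex$-increasing while simultaneously forcing $\beta_e=A$ on exactly $A$ and still enumerating every other left-r.e.\ set somewhere. The two ``freeze'' devices are what reconcile the first demand with monotonicity, and the slow revelation of $A$ together with the pigeonhole count is what guarantees enough ``clean'' ($\beta_{e,\cdot}=\emptyset$) slots inside $B$ for the second; I expect that pigeonhole step and the $\lex$-monotonicity bookkeeping (especially the transitions between $A$-mode, suspension, and $\alpha$-mode) to be the main points needing care.
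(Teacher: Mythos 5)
Your proof is correct and the two directions deserve separate comments. The necessity argument is essentially the paper's: the paper takes a rational $r$ with $.A<r<1$ and sets $B=\{e:(\exists s)[\alpha_{e,s}>r]\}$, which is exactly your $B$ with $r$ the dyadic rational determined by the least non-element $n^*$ of $A$; same idea, same monotonicity observation. The sufficiency direction, however, is genuinely different. The paper gets ``all left-r.e.\ sets except $A$'' for free by citing the Brodhead--Kjos-Hanssen Friedberg numbering of the left-r.e.\ sets with $A$ deleted, placing that numbering on an infinite \emph{recursive} subset $R$ of $B$, putting $A$ on the indices in $A$, and freezing the approximation $A_s$ at the last stage $s$ with $e\in A_s$ on the remaining indices (your $A$-mode freeze is exactly this device). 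You instead build the ``everything except $A$'' part from scratch: every left-r.e.\ $Z\ne A$ is $\tau\cdot\alpha_k$ for some $\tau\not\sqsubseteq A$, and since ``$\tau\sqsubseteq A$'' is not decidable you run each pair $(\tau,k)$ with a suspension rule that advances $\alpha_k$ only at stages where $\tau$ visibly disagrees with $A_s\restr|\tau|$, so that the bad pairs freeze at recursive sets. This buys self-containedness (no appeal to the Friedberg-numbering theorem, which the paper uses as a black box) at the cost of the extra mode-transition bookkeeping, all of which you handle correctly. One spot to tighten: your pigeonhole claim that infinitely many $e\in B$ satisfy $b^{-1}(e)\le e$ is not true for an arbitrary recursive bijection onto an arbitrary infinite set (for $B$ co-finite the count $N_0+1$ versus $|B\cap\{0,\dots,N_0-1\}|$ can leave a deficit of only $1$); it does hold here because in the non-recursive branch $A$ is infinite and $B\subseteq\overline{A}$, so $|B\cap\{0,\dots,N_0-1\}|\le N_0-|A\cap\{0,\dots,N_0-1\}|\to N_0-\infty$, which is the inequality you should state explicitly. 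Alternatively you could sidestep the issue entirely by activating $e=b(s)$ at any late stage and choosing $\tau$ to exceed the current approximation $\beta_{e,s}$ lexicographically, which is how the paper's Lemma~\ref{K-recursive into itself} handles the analogous transition.
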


\begin{proof}
Assume $A$ can be made into itself via a universal left-r.e.\
numbering $\alpha$.  Then $A \notin \{\emptyset, \omega\}$, so there
exists a rational number $r$ with $.A < r < 1$ where ``.A'' is the set
$A$ interpreted as a real number between 0 and 1.  Let
\[
B = \{e : (\exists s) [\alpha_{e,s} > r] \}.
\]
Then $A \intersect B = \emptyset$, $B$ is r.e., and $B$ is infinite.

Conversely, assume $A \neq \emptyset$, and $B$ is an infinite r.e.\
set satisfying $A \intersect B = \emptyset$.  Then $B$ has an infinite
recursive subset  $R = \{b_0, b_1, \dotsc\}$
Brodhead and Kjos-Hanssen \citep{BK09} showed that there exists a
\emph{Friedberg numbering}, or enumeration without repetition, of the
left-r.e.\ reals.  Let $\alpha$ be a Friedberg numbering of the
left-r.e.\ reals with the real $A$ deleted from the enumeration.

If $A$ is a finite set whose maximum element is $m$, then we can
hardwire $A$ into the numbering $\gamma$ as follows:
\[
\gamma_e = 
\begin{cases}
A & \text{if $e \in A$,} \\
\emptyset & \text{if $e \notin A$ and $e < m$,} \\
\alpha_{e - (m+1)} & \text{if $e > m$}.
\end{cases}
\]
Then $\gamma$ makes $A$ into itself.  Now assume $A$ is infinite, and
let $A_0, A_1, A_2, \ldots$ be a recursive approximation of $A$ from
below where $A_n \neq A$ for all $n$. We then build a further
numbering $\gamma$ such that
\[
  \gamma_e =
\begin{cases}
     \alpha_d & \text{if $e = b_d$,} \cr
     A_s      & \text{if $e \in \complement{A} \cap \complement{R}$
and $s = \max\{t: e \in A_t\}$,} \cr
     A        & \text{if $e \in A \cap \complement{R}$.} \cr
\end{cases}
\]
This $\gamma$ witnesses that $\{A\}$ can be made into itself. 
Moreover, $\gamma_e$ is left-r.e.\ via the following algorithm. 
Recursively decide whether the first case above is satisfied, and if
it is not then $\gamma_e$ follows the left-r.e.\ approximation for $A$
whenever it appears that $e \in A$.
\end{proof}

\noindent
In canonical universal left-r.e.\ numberings, no set gets made into itself.

\begin{prop}
Let $\alpha$ be an acceptable universal left-r.e.\ numbering.  Then
for every set $B$, $\{e : \alpha_e = B\} \neq B$.
\end{prop}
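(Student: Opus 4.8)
Write $C=\{e:\alpha_e=B\}$; the goal is to produce an index $e$ at which $C$ and $B$ disagree. First dispose of the trivial case: if $B$ is not left-r.e.\ then $C=\emptyset$, and since $\emptyset$ is left-r.e.\ this gives $C=\emptyset\neq B$. So from now on assume $B$ is left-r.e.

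The plan is to build an auxiliary left-r.e.\ numbering $\gamma$ with the self-referential feature that $\gamma_n=B$ if and only if $n\notin B$, and then invoke the Recursion Theorem. Concretely: since $\alpha$ is acceptable there is a recursive $f$ with $\alpha_{f(n)}=\gamma_n$ for all $n$, and the Recursion Theorem (which is available because $\alpha$ is acceptable) supplies an index $e$ with $\alpha_e=\alpha_{f(e)}=\gamma_e$. For this $e$ one then has $e\in C\iff\alpha_e=B\iff\gamma_e=B\iff e\notin B$, so $C$ and $B$ differ at $e$, which finishes the argument. Note that $\gamma$ need only be \emph{a} left-r.e.\ numbering, not a universal one, so only the left-r.e.-numbering clause of acceptability is used; and the concluding step, once $\gamma$ and $e$ are in hand, is completely routine.

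It remains to construct $\gamma$, and this is the only real work. If $B$ is recursive, fix once and for all a left-r.e.\ set $D\neq B$ (take $D=\emptyset$, or $D=\mathbb N$ when $B=\emptyset$) and let $\gamma_n=B$ for $n\notin B$ and $\gamma_n=D$ for $n\in B$; since ``$n\in B$'' is decidable this is a uniformly recursive left-r.e.\ numbering with the required property. If $B$ is left-r.e.\ but not recursive, then, $B$ being neither finite nor cofinite, its value as a binary real is irrational, so a fixed recursive monotone approximation $B_0\leq_{\lex}B_1\leq_{\lex}\cdots$ of $B$ converges to $B$ bitwise and satisfies $B_s\neq B$ for every $s$. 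Put $\gamma_{n,s}=B_{t(n,s)}$, where $t(n,0)=0$ and at stage $s+1$ the counter $t$ is incremented iff the current approximation says $n\notin B$ (that is, $B_{s+1}(n)=0$) and is otherwise frozen. Since $B_s$ and $t(n,s)$ are both nondecreasing, $\gamma_{n,s}$ is $\leq_{\lex}$-nondecreasing, so $\gamma$ is a legitimate left-r.e.\ numbering; if $n\notin B$ then $t(n,s)\to\infty$ and $\gamma_n=B$, while if $n\in B$ then $t(n,s)$ eventually freezes at some $t^\ast$ and $\gamma_n=B_{t^\ast}\neq B$. The main obstacle is precisely this construction: one wants $\gamma_n$ to ``know'' whether $n\in B$, but membership in a left-r.e.\ set is only approximable and the approximation may oscillate, whereas the stage approximations of $\gamma_n$ are only allowed to move upward in $\leq_{\lex}$. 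The ``freeze as soon as there is evidence that $n\in B$'' device handles this because the frozen target $B_{t^\ast}$ lies lexicographically below $B$ and is reached monotonically; and it is exactly the possibility that a recursive $B$'s approximation reaches $B$ exactly (so that no $B_{t^\ast}$ differs from $B$) that forces the split into the recursive and non-recursive cases.
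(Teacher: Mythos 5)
Your proof is correct, but it takes a genuinely different route from the paper's. The paper disposes of finite $B$ by noting that a finite set always has an infinite index set, and for infinite left-r.e.\ $B$ it defines $\beta_e = B \cap \{x: (\exists y \in W_e)\,[x<y]\}$, so that $\beta_e = B$ iff $W_e$ is infinite; translating $\beta$ into $\alpha$ then many-one reduces the $\Pi^0_2$-complete infinity problem to $\{e:\alpha_e=B\}$, which therefore is not $\Delta^0_2$, hence not left-r.e., hence not equal to $B$. You instead manufacture an explicit index of disagreement via a self-referential numbering $\gamma$ with $\gamma_n = B$ iff $n\notin B$ and a fixed point $\alpha_e=\gamma_e$; your ``freeze the counter once the approximation claims $n\in B$'' device is a clean way to extract this anti-monotone behaviour from a $\leq_\lex$-monotone approximation, and your case split is complete. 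Two small points deserve attention. First, you invoke the Recursion Theorem for $\alpha$ as if it were automatic from acceptability; the paper's notion of acceptability only says that every left-r.e.\ numbering reduces to $\alpha$ (there is no s-m-n clause), so you should note that $\alpha$ is precomplete --- given a partial recursive $\psi$, the numbering that outputs $\emptyset$ (the $\leq_\lex$-least set) while waiting for $\psi(n)$ to converge and then follows $\alpha_{\psi(n)}$ is left-r.e., and the usual diagonal argument then yields the fixed point $\alpha_e=\alpha_{f(e)}$. Second, the reason $B_s\neq B$ in the non-recursive case is simply that each column $B_s$ of a uniformly recursive approximation is a recursive set while $B$ is not; the appeal to irrationality is a detour and would not by itself exclude $B_s=B$ unless the $B_s$ were finite. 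In exchange for these extra justifications, your argument is closer in spirit to Rice's theorem and exhibits a concrete witness of the inequality, whereas the paper's argument additionally shows that the index set is $\Pi^0_2$-hard and thus far from left-r.e.
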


\begin{proof}
Every finite set has an infinite index set and is thus not made into itself.
For every infinite set consider the left-r.e.\ numbering $\beta$ given by
\[
\beta_e = B \cap \{x: (\exists y \in W_e)\: [ x < y ]\}.
\]
Note that $\beta_e = B$ iff $W_e$ is infinite and that there is a
recursive function $f$ with $\alpha_{f(e)} = \beta_e$ for all $e$.  It
follows that $W_e$ is infinite iff $\alpha_{f(e)} = B$.  Hence $\{e :
\alpha_e = B\}$ is not left-r.e.\ but rather $\Pi^0_2$-complete like
the index set for the infinite sets \cite{Soa87}.
\end{proof}

\section{Making things into themselves simultaneously}

\noindent
Having made certain classes into themselves and others not, we now
investigate which collections of classes can be simultaneously made
into themselves using a single numbering.

\begin{defn}
We say that $\A$ and $\B$ can be \emph{simultaneously} made into
themselves if there is a numbering which makes both $\A$ into itself
and $\B$ into itself.  
\end{defn}

\noindent
One thing we do not get at the same time is Martin-L\"of random
sets and weakly 1-generic sets.  We showed in Corollary~\ref{randoms
are random2} and Corollary~\ref{cor: bobo} that each of these classes
can be made into themselves (by themselves), however their combination
results in calamity.

\begin{prop} \label{prop: ML 1-generic calamity}
The Martin-L\"of random sets and weakly 1-generic sets cannot
simultaneously be made into themselves.
\end{prop}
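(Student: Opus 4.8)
The plan is to derive a contradiction from the assumption that a single universal left-r.e.\ numbering $\beta$ makes both the Martin-L\"of random sets and the weakly 1-generic sets into themselves. Write $R = \{e : \beta_e \text{ is Martin-L\"of random}\}$ and $G = \{e : \beta_e \text{ is weakly 1-generic}\}$. By hypothesis $R$ is itself Martin-L\"of random and $G$ is itself weakly 1-generic. The key observation is that no set can be both: a Martin-L\"of random set fails to have a prefix in certain dense r.e.\ sets of strings (for instance, the r.e.\ set of strings containing a block of consecutive zeros of length exceeding the randomness constant is dense, yet a random set avoids having any such block past a fixed point, so it has no prefix in that dense set), whereas a weakly 1-generic set does meet every dense r.e.\ set of strings. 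Hence $R$ and $G$ are disjoint classes at the level of sets, so as index sets they satisfy $R \cap G = \emptyset$ --- indeed $R$ and $G$ are disjoint subsets of $\omega$.

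The engine of the contradiction is that $R$, being Martin-L\"of random, is an infinite and co-infinite set, and likewise $G$. Now I would exploit disjointness together with these cardinality facts. Since $R$ and $G$ are disjoint, $G \subseteq \overline{R}$, so $G$ is a weakly 1-generic set contained in the complement of the random set $R$. Dually, $R$ is a Martin-L\"of random set contained in $\overline{G}$. The intended obstruction is a measure/category clash combined with effectivity: $R$ is a $\Sigma^0_2$ or at worst arithmetical set inside the numbering $\beta$ (it is the index set of the random sets, which is $\Pi^0_2$ relative to a suitable oracle in a universal left-r.e.\ numbering), and similarly for $G$; but a weakly 1-generic set contained in the complement of a fixed set $R$ is forced to "dodge" $R$ densely, and this dodging, iterated against the requirement that $R$ also be random relative to the structure, cannot be consistently arranged. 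Concretely, I expect the argument to run: weak 1-genericity of $G$ forces, for each string $\sigma$, a prefix of $G$ extending $\sigma$ lying in every dense r.e.\ set of strings; picking the dense r.e.\ set to be "strings that, read as partial information, force some prefix-incompressibility violation at the indices where $R$ is being computed," one contradicts randomness of $R$.

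The cleaner route, which I would actually pursue, is to reduce to an already-proven impossibility rather than build a fresh argument: use Theorem~\ref{thm:can't enumerate the non-r.e. sets} or Proposition~\ref{prop:left-r.e. ML not=self} as a black box. Given $\beta$ making both classes into themselves, one manipulates $\beta$ on the index set $G$ (respectively $R$) by overwriting those indices with a single fixed left-r.e.\ Martin-L\"of random real $X$ (as in the proof of Corollary~\ref{cor: pigpong}), noting that since $G$ is weakly 1-generic it is in particular infinite and co-infinite, so the modification is well-defined and the result is still a universal left-r.e.\ numbering; one then argues that the index set of the random sets in the modified numbering would have to be simultaneously weakly 1-generic (inherited from $G$ on the untouched part) and yet random-related, forcing the non-random index set to be too low in the arithmetical hierarchy, contradicting the Kjos-Hanssen--Stephan--Teutsch bound cited in Proposition~\ref{prop:left-r.e. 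ML not=self}. The main obstacle, and the step I would spend the most care on, is verifying that the overwriting operation genuinely preserves universality and the left-r.e.\ numbering conditions (i)--(ii) while correctly tracking how $R$ and $G$ transform; the disjointness $R \cap G = \emptyset$ is exactly what makes the two overwrites non-interfering, so this is where that observation is used essentially.
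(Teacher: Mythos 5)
Your opening observation --- that $R = \{e : \beta_e \text{ ML random}\}$ and $G = \{e : \beta_e \text{ weakly 1-generic}\}$ are disjoint because no set is both Martin-L\"of random and weakly 1-generic --- is correct and is indeed one of the two ingredients of the paper's proof (which cites \cite[Proposition~8.11.9]{DH10} for it). But your justification of that side fact is itself wrong: a Martin-L\"of random set does \emph{not} ``avoid having any block of consecutive zeros of length exceeding the randomness constant past a fixed point''; random sequences contain zero-blocks of length about $\log n$ near position $n$ for infinitely many $n$. What a random set cannot have are zero-blocks whose length is large \emph{compared to their starting position}. That distinction is not a quibble --- it is exactly the quantitative fact your proof is missing.

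The genuine gap is that neither of your two proposed ``engines'' produces a contradiction. The first (``measure/category clash combined with effectivity,'' dense sets ``forcing prefix-incompressibility violations at the indices where $R$ is being computed'') never becomes an argument; disjointness of two infinite, co-infinite sets is of course consistent, so something quantitative must be extracted from $G$ being weakly 1-generic \emph{as a sequence}. The second route fails at the step you yourself flag: the overwriting construction of Corollary~\ref{cor: pigpong} requires the set of indices being overwritten to be \emph{r.e.}, so that the switch to $\sigma\cdot X$ can be made effectively when an index is enumerated; but $G$ is the index set of the weakly 1-generic sets and there is no reason for it to be r.e., so the modified $\beta$ is not a left-r.e.\ numbering, and even granting the modification you give no argument that it violates the Kjos-Hanssen--Stephan--Teutsch bound. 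The missing idea, which is the entire content of the paper's proof, is this: since $G$ is weakly 1-generic it meets every dense r.e.\ set of strings, in particular $\{\sigma 1^{|\sigma|}:|\sigma|\ge k\}$ for each $k$, so $G$ contains, infinitely often, a run of 1's of length at least its starting position (\cite[Theorem~3.5.5]{Nie09}); by disjointness $R$ then contains runs of 0's of the same length at the same positions; and such long zero-runs compress the corresponding prefixes of $R$ (roughly $H(R\restr 2n)\le n+O(\log n)$), contradicting Martin-L\"of randomness of $R$ (\cite[Theorem~3.5.21]{Nie09}). Without that quantitative bridge from ``disjoint'' to ``non-random,'' the proposal does not close.
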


\begin{proof}
Assume that $\alpha$ makes the weakly 1-generic sets into themselves. 
 Then the characteristic sequence for the $\alpha$-index set of the
weakly 1-generic sets is itself weakly 1-generic and hence must
contain very long runs of 1's \cite[Theorem~3.5.5]{Nie09}.  On the other
hand, no Martin-L\"of random sets is weakly 1-generic
\cite[Proposition~8.11.9]{DH10}, and therefore the $\alpha$-index set
for the Martin-L\"of random sets must contain very long runs of 0's. 
Thus it follows from \cite[Theorem~3.5.21]{Nie09}, which says that
long runs of 0's prevent a set from being Martin-L\"of random, that
the Martin-L\"of random sets do not get made into themselves using $\alpha$.
\end{proof}

\noindent
We note that for many classes which can be made
into themselves and which have complementary classes which can also be
made into themselves, the class and its complementary class cannot be
simultaneously made into themselves.

\begin{prop} \label{prop: simultaneous complement}
Any class closed under complements cannot be simultaneously made into
itself with its complement.
\end{prop}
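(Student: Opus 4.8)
The plan is to derive a contradiction straight from the definitions, using the elementary observation that inside any numbering the index set of the complement class $\{X : X \notin \C\}$ is literally the set-theoretic complement of the index set of $\C$. So there is no construction to carry out here: the entire argument is bookkeeping the two index sets against each other.

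First I would suppose, towards a contradiction, that a single universal left-r.e.\ numbering $\alpha$ makes both $\C$ and its complement $\{X : X \notin \C\}$ into themselves, and put $S = \{e : \alpha_e \in \C\}$. The first hypothesis gives $S \in \C$. Next I would note that $\{e : \alpha_e \notin \C\}$ is exactly ${\mathbb N} - S$, i.e.\ the set-complement $\complement S$; since $\alpha$ also makes $\{X : X \notin \C\}$ into itself, this forces $\complement S \notin \C$. Finally, closure of $\C$ under complements, applied to $S \in \C$, yields $\complement S \in \C$, contradicting the previous line, so no such $\alpha$ can exist.

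The only point deserving a sentence of care — and it is not really an obstacle — is the reading of ``made into itself'' for the complement class: since that class typically contains sets that are not left-r.e., one should read the requirement simply as $\{e : \alpha_e \notin \C\} \in \{X : X \notin \C\}$, which is precisely what the definition says. With that reading the three steps above close the argument, and, unlike most results in this section, no appeal to any earlier theorem is needed; the proposition is essentially a duality remark about index sets.
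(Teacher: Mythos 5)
Your argument is correct and is essentially the paper's own proof: both observe that the index set of the complement class is the set-complement of the index set $S$ of $\C$, so closure under complements forces $\complement{S} \in \C$ and hence the complement class fails to be made into itself. The paper merely phrases this as a direct implication rather than a contradiction, and your remark about the intended reading of ``made into itself'' for a class containing non-left-r.e.\ sets is consistent with the paper's definitions.
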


\begin{proof}
Suppose that some class which is closed under complements can be made
into itself.  Then the indices for the complement in any universal
left-r.e.\ numbering are also a member of the original class and hence
do not belong to its complement.
\end{proof}

\noindent
Examples of important classes for which Proposition~\ref{prop:
simultaneous complement} applies
include the Martin-L\"of random sets and the autoreducible sets. 
Corollary~\ref{randoms are random2} established that the
Martin-L\"of random sets can be made into themselves, and any
acceptable universal left-r.e.\ numbering will make the
non-Martin-L\"of random sets into themselves via the Padding Lemma
\cite{Soa87}.  We established in the discussion following
Proposition~\ref{prop:left-r.e. ML not=self} that any acceptable
universal left-r.e.\ numbering also makes the autoreducible sets into
themselves.  Hence the following corollary holds.

\begin{cor}
The class of all sets which are not Martin-L\"of random and
the class of all autoreducible sets are
simultaneously made into themselves by any acceptable universal
left-r.e.\ numbering.
\end{cor}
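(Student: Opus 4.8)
The plan is to observe that this corollary is simply the amalgamation of two facts already isolated in the discussion above, each of which holds for \emph{every} acceptable universal left-r.e.\ numbering, so that one fixed such numbering witnesses both halves of the statement at once. Accordingly, I would first fix an arbitrary acceptable universal left-r.e.\ numbering $\alpha$ and put $S = \{e : \alpha_e$ is not Martin-L\"of random$\}$ and $T = \{e : \alpha_e$ is autoreducible$\}$; the two remaining tasks are to show that $S$ is not Martin-L\"of random and that $T$ is autoreducible. Throughout I would use that the Padding Lemma \citep{Odi89,Rog58} yields, uniformly in $e$, an infinite recursive set $R_e$ of $\alpha$-indices with $\alpha_d = \alpha_e$ for every $d \in R_e$.

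For $S$, I would fix an index $e_0$ with $\alpha_{e_0} = \emptyset$, so that $R_{e_0}$ is an infinite recursive subset of $\{e : \alpha_e = \emptyset\} \subseteq S$. The key point is that a Martin-L\"of random set has no infinite recursive subset: were $r_0 < r_1 < \dotsb$ to enumerate one inside a random $X$, then from $X \restr r_n$ one could delete the forced values $X(r_0),\dotsb,X(r_n)$ and recover $X \restr r_n$ from roughly $r_n - n + O(\log n)$ bits, contradicting $H(X \restr m) \geq m - O(1)$. Hence $S$ is itself not Martin-L\"of random, i.e.\ $S$ belongs to the class of non-Martin-L\"of random sets, so $\alpha$ makes that class into itself.

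For $T$, the autoreduction I have in mind proceeds, on input $e$, by searching $R_e$ for some element $d \neq e$ (one exists since $R_e$ is infinite) and then querying $T$ at $d$; because $\alpha_d = \alpha_e$, the answer satisfies $d \in T \iff e \in T$, which is exactly what an autoreduction requires (and it also exhibits $T$ as a cylinder). Thus $T$ is autoreducible and $\alpha$ makes the autoreducible sets into themselves. Combining the two paragraphs, the single numbering $\alpha$ makes both classes into themselves simultaneously, and since $\alpha$ was an arbitrary acceptable universal left-r.e.\ numbering, the corollary follows.

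I do not anticipate a genuine obstacle here; the only point needing attention is that the two arguments impose no conflicting demands on the numbering. Unlike the negative situation of Proposition~\ref{prop: ML 1-generic calamity}, where the two target classes force incompatible patterns on their index sets, here both properties are secured by the same structural feature --- recursive padding --- of \emph{every} acceptable universal left-r.e.\ numbering, so a common witness is automatic and no extra coordinating construction is needed.
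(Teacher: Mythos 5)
Your proposal is correct and follows essentially the same route as the paper: the paper also derives both halves from the Padding Lemma applied to an arbitrary acceptable universal left-r.e.\ numbering, noting that the index set of the non-Martin-L\"of random sets contains an infinite recursive subset (so cannot be random) and that the index set of the autoreducible sets is a cylinder and hence autoreducible. You merely spell out the details (the incompressibility argument and the explicit autoreduction) that the paper leaves implicit in the surrounding discussion.
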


\section{Minimal and maximal left-r.e.\ sets} \label{sec:minmax}

\noindent
A coinfinite r.e.\ set $A$ is called \emph{maximal} \cite{Fri58} iff there is
no coinfinite r.e.\ superset $E \supset A$ with $E-A$ being infinite;
in other words, an r.e.\ set $A$ is maximal iff $A \subset^* {\mathbb N}$
and there is no r.e.\ set $E$ with $A \subset^* E \subset^* {\mathbb N}$.
The corresponding notion of minimal r.e.\ sets does not exist. 
Indeed, every infinite r.e.\ set $A$ contains an infinite recursive
subset, and one can recursively remove every other element from this
infinite recursive set to obtain an infinite r.e.\ subset of $A$ with
infinitely fewer elements.

To what extent does the inclusion structure for the left-r.e.\ sets
resemble that of the r.e.\ sets?  One difference between these two
structures is immediate.  Unlike the situation for r.e.\ sets,
intersections and unions of left-r.e.\ sets need not be left-r.e.; only the join
\[
E \oplus F = \{2x: x \in E\} \cup \{2y+1: y \in F\}
\]
of left-r.e.\ sets $E$ and $F$ is always left-r.e.  For example,
$\Omega$ intersected with the set of even numbers, call this set $A$,
is not a left-r.e.\ set. If it were, then one could use this set to build a
left-r.e.\ approximation for the set $B = \{x : 2x \in \Omega\}$ by
updating at each stage those $B$-indices $e$ for which every odd
$A$-index below $2e$ shows a zero.  But, as established in
\eqref{eqn:2xOmega}, $B$ is low and Martin-L\"{o}f random,
contradicting that every left-r.e.\ Martin-L\"{o}f random is an
$\Omega$-number \cite{CHKW01,KS01,MT13} and that every $\Omega$-number is
weak-truth-table equivalent to, and
hence Turing equivalent to, the halting problem \cite{CN97}.
An analogous construction shows
that left-r.e.\ sets are not closed under inclusion.

\begin{defn} \label{def:zulu}
A left-r.e.\ set $A$ is called a \emph{minimal left-r.e.\ set}
iff $\emptyset \subset^* A$ and there is no left-r.e.\ set $E$
with $\emptyset \subset^* E \subset^* A$.
A left-r.e.\ set $B$ is called a \emph{maximal left-r.e.\ set}
iff $B \subset^* {\mathbb N}$ and there is no left-r.e.\ set $E$
with $B \subset^* E \subset^* {\mathbb N}$.
\end{defn}

\noindent
The next result shows that both types of sets exist, in contrast to
the r.e.\ case where only maximal sets exist. Neither maximal
left-r.e.\ sets, nor minimal left-r.e.\ sets, nor their respective
complements need be hyperimmune (in contrast to the complements of
maximal r.e.\ sets \cite[Proposition~III.4.14]{Odi89}).  

\begin{thm} \label{thm:zulu}
There are a minimal set $A$ and a maximal set $B$ in the partially
ordered structure of all left-r.e.\ sets and $\subset^*$.
\end{thm}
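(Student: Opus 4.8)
The plan is to construct $A$ and $B$ directly as limits of lexicographically increasing recursive approximations, exploiting the key structural fact that left-r.e.\ sets (unlike r.e.\ sets) need not contain an infinite recursive subset and need not be closed under subset. For the minimal set $A$, the naive idea is to make $A$ so ``thin'' in an effective sense that any left-r.e.\ subset $E \subseteq^* A$ is forced either to be finite or to recapture almost all of $A$. Concretely, I would try to build $A$ with a single infinite ``column'' of activity: place the elements of $A$ at a very sparse, $\leq_\lex$-controlled sequence of positions $a_0 < a_1 < \dots$, and arrange that whenever a candidate left-r.e.\ numbering $\beta_e$ threatens to enumerate a set sitting $\subset^*$-strictly between $\emptyset$ and $A$, we can react. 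The left-r.e.\ constraint is what we leverage: if $\beta_e$ commits (lexicographically, hence irrevocably from below) to omitting some $a_n \in A$, then to have $\beta_e \subseteq^* A$ it must also eventually omit cofinitely much — but we can then diagonalize by making $A$ put more weight low down, forcing $\beta_e$ to be either $=^* A$ or finite. The requirements are $R_e$: if $\beta_e$ is left-r.e.\ and $\emptyset \subset^* \beta_e \subseteq^* A$, then $\beta_e =^* A$. These are $\Pi^0_2$-ish, so a finite-injury or movable-marker argument over stages, similar in spirit to Lemma~\ref{K-recursive implies retraceable}, should suffice; I would use a priority list over all pairs $(e, k)$ guessing the finite symmetric difference.

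For the maximal set $B$, I would dually build $B$ with a sparse \emph{complement} $\overline{B} = \{c_0 < c_1 < \dots\}$ and ensure that no left-r.e.\ $E$ can satisfy $B \subset^* E \subset^* \mathbb N$. The point again is that an r.e.\ set trying to enumerate such an $E$ would have to pick up infinitely many $c_n$'s while leaving infinitely many out, and the $\leq_\lex$-monotonicity of $E$'s approximation lets us detect and block such behavior: once $E_s$ irrevocably commits from below to containing some $c_n \notin B$, we push the remaining markers of $\overline B$ high and thin, so that $E$ is squeezed into being $=^* B$ or $=^* \mathbb N$. The requirements $S_e$ mirror the $R_e$ above, and I would interleave the two constructions (or run them independently) via a standard priority argument; the last sentence of the theorem, about non-hyperimmunity, can be arranged for free by making the complement (resp.\ the set) enumerated along a recursive-looking skeleton of positions, e.g.\ insisting $c_n \leq 2^{n+2}$ so that $\overline B$ and $B$ are both majorized by a recursive function.

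The main obstacle I anticipate is the interaction between the $\leq_\lex$-monotonicity requirement on our own approximation $A_s$ (resp.\ $B_s$) and the need to ``inject more weight low down'' when diagonalizing. Moving a marker $a_n$ downward, or adding a small element to $A$, typically \emph{decreases} the set lexicographically, which is forbidden for a left-r.e.\ approximation; so the whole diagonalization must be organized so that every corrective action is realized as a lexicographic \emph{increase} — i.e.\ by adding elements at positions we had reserved as ``potentially in $A$'' and that lie lexicographically above the current approximation, or by a careful bookkeeping of which positions are ``pending.'' This is exactly the reason, as the introduction stresses, that the construction ``greatly differs from Friedberg's classical construction.'' I would handle it by designing $A$ (resp.\ $B$) so that at each position $n$ there is a pre-committed order in which $A(n)$ may change (only $0 \to 1$, never back), and so that responding to requirement $R_e$ always means \emph{growing} the set at some large position while leaving a small cofinite tail alone — thereby keeping $\beta_e$'s only consistent options as $=^* A$ or finite. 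Verifying that all requirements can be met simultaneously under this monotonicity straightjacket, and that the resulting $A$ and $B$ are genuinely coinfinite / infinite respectively, is the delicate core of the proof.
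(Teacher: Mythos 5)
Your proposal is not yet a proof: the central step --- the concrete action by which a single requirement $R_e$ is satisfied --- is never specified, and you yourself defer it as ``the delicate core.'' This gap is genuine, and two of the premises your plan rests on are in fact backwards. First, a left-r.e.\ approximation never ``commits irrevocably from below to omitting'' an element: adding an element $x$ (with nothing below $x$ changing) is always a lexicographic \emph{increase}, so $\beta_e$ may reinstate any omitted $a_n$ at any later stage; the only irrevocable commitments are lexicographic lower bounds. Second, the moves you describe as forbidden --- ``moving a marker $a_n$ downward, or adding a small element to $A$'' --- are exactly the permitted ones for a left-r.e.\ approximation, while the forbidden moves are removals and upward marker movements. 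So the monotonicity constraint cuts in the opposite direction from the one your strategy is designed around. Once this is corrected, it becomes unclear what a strategy for $R_e$ could do at all: you cannot force $\beta_e$ to include anything, and you cannot expel an element from $A$ without inserting a smaller one, so there is no evident local action that defeats an opponent which simply tracks $A_s$ and keeps a suitable infinite, co-infinite (in $A$) portion of the currently apparent elements. Ruling out \emph{every} such derived set is a global property of the approximation to $A$, not something met requirement-by-requirement, and your sketch contains no mechanism that secures it.

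The paper's proof sidesteps diagonalisation entirely. It places the $n$-th element of $A$ at a position inside a recursive interval $I_n$ that encodes the initial segment $\Omega \restr 2^n$ (and defines $\overline{B}$ dually via a bit-reversal $g$ of the same positions), arranging matters so that the least moving marker of $A$ always moves downward and that of $\overline{B}$ always upward, which gives the left-r.e.\ approximations. Minimality is then verified information-theoretically: if a left-r.e.\ $E$ with $\emptyset \subset^* E \subseteq^* A$ omitted infinitely many $a_n$, then from $c_n$ together with about $n$ extra bits one could locate a stage of $E$'s approximation that already pins down $c_{n+1}$, compressing a string of prefix-free complexity roughly $2^n$ down to roughly $n$ bits --- a contradiction; maximality of $B$ is symmetric. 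This is precisely the sense in which the construction ``greatly differs from Friedberg's'': the work is done by the incompressibility of $\Omega$, not by a priority argument. If you wish to pursue a finite-injury route you must first exhibit what a single strategy does when $\beta_e$ appears to be an infinite, co-infinite subset of the current $A_s$; as written, no such strategy is on the table.
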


\begin{proof}
Let $\Omega$ be Chaitin's Martin-L\"of random set and let $\Omega_s$
be a left-r.e.\ approximation to it. Furthermore, let
\[
c_{n,s} = \sum_{m < 2^n} 2^{2^n-m} \Omega_s(m)
\]
and $c_n = \lim_{s\to\infty} c_{n,s}$.  Let $d_n = c_n - 2^{2^{n-1}} c_{n-1}$
so that
$d_n$ is the sum of all $2^{2^n-m} \Omega(m)$ with $m = 2^{n-1},2^{n-1+1},
\ldots,2^n-1$. Note that $c_n \leq 2^{2^n}$ for all $n$.
Let $I_1,I_2,\ldots$ be a
recursive partition of $\mathbb N$ into intervals such that each interval
$I_n$ contains all numbers $\langle n,x,y\rangle =
\min(I_n)+x \cdot 2^{2^n} + y$ with $x,y \in \{0,1,\ldots,2^{2^n}-1\}$.
Now let
\begin{eqnarray*}
  a_n & = & \langle n,c_{n-1},2^{2^n}-1-d_n\rangle \mbox{ for $n>0$}, \\
  b_n & = & g(a_n) \mbox{ where } \\
  g(u) & = &  \max(I_n)+\min(I_n)-u \mbox{ for all $n$ and all $u \in I_n$,} \\
  A & = &  \{a_1,a_2,\ldots\} \mbox{ and }
  B \ = \ {\mathbb N}-\{b_1,b_2,\ldots\}.
\end{eqnarray*}
So $g$ is defined such that if $u$ is the $r^\text{th}$ smallest
element of $I_n$
then $g(u)$ is the $r^\text{th}$ largest element of $I_n$. Note that $A$ and $B$
are btt-equivalent: $u \in A \Leftrightarrow g(u) \notin B$.
Now it is shown that $A$ is a minimal left-r.e.\ set and $B$ is a
maximal left-r.e.\ set.

The set $A$ is left-r.e.\ as one can start the enumeration at $s_0$
with $c_{0,s} = c_0$ and letting, for $s \geq s_0$,
$A_s = \{a_{1,s},a_{2,s},\ldots,a_{s,s}\}$. Then one
has for each $s \geq s_0$ that whenever there is 
an $n$ with $a_{n,s+1} > a_{n,s}$ then there is also
a least $m \leq n$ where $a_{m,s+1} \neq a_{m,s}$ and it follows
that for this number the change is in the $d$-part of
$a_{m,s} = \langle m,c_{m-1,s},2^{2^m}-1-d_{m,s}\rangle$ so that
$a_{m,s+1} < a_{m,s}$. Hence it holds that $A_s \leq_{\lex} A_{s+1}$
and the approximation of the $A_s$ is
an left-r.e.\ approximation. Furthermore, let $B_s = (I_1-\{b_{1,s}\})
\cup (I_2 - \{b_{2,s}\}) \cup \ldots \cup (I_s - \{b_{s,s}\})$.
Note that $g$ inverts the direction of the approximation in the intervals.
Hence, if $s \geq s_0$ and $b_{n,s+1} \neq b_{n,s}$ then the least $m \leq n$
with $b_{m,s+1} \neq b_{m,s}$ satisfies that $b_{m,s+1} > b_{m,s}$.
Hence one can see that for $s \geq s_0$ it holds that $B_s \leq_{\lex}
B_{s+1}$ and $\lim B_s = B$.

Assume now that $E$ is an infinite left-r.e.\ subset of $A$ and let $E_s$
be a left-r.e.\ approximation of $E$. For any $n$ where $a_{n+1} \notin E$
and $a_{n+2} \in E$, let $\sigma$ be an $n$-bit binary string telling which of
the first $n$ elements $a_1,\ldots,a_n$ is in $E$ and let
$\psi(\sigma,c_n)$ be a partial-recursive function identifying the
first stage $s \geq s_0$ such that $a_{1,s} = a_1$, $a_{2,s} = a_2$,
$\ldots$, $a_{n,s} = a_n$ and
$$
   E_s \cap J_{n+2} = 
   \{a_{m,s}: m \in \{1,2,\ldots,n\} \wedge
   \sigma(m)=1\} \cup \{a_{n+2,s}\};
$$
where $J_n = I_1 \cup I_2 \cup \ldots \cup I_n$.
Note that $n,a_1,\ldots,a_n$ can all be computed
from $c_n$. Now, due to $E_s \leq_{\lex} E$, the final
value of $a_{n+2}$ cannot exceed $a_{n+2,s}$ for the chosen~$s$,
hence $c_{n+1,s} = c_{n+1}$. This implies that for all the $n$ where
$a_{n+1} \notin E \wedge a_{n+2} \in E$ it holds that the Kolmogorov complexity
of $c_{n+1}$ given $c_n$ is at most $n$ bits plus a constant; however, the
prefix-free Kolmogorov complexity of each $c_n$ is approximately $2^n$
and therefore there can only be finitely many such $n$. It follows that
almost all $a_n$ are in $E$.
This shows that $A$ is a minimal left-r.e.\ set.

To see that $B$ is maximal, consider any coinfinite left-r.e.\ set $E$
containing $B$.
As before one computes for each $n$ with
$b_{n+1} \in E \wedge b_{n+2} \notin E$ and $\sigma$ being an
$n$-bit string telling which of $b_1,b_2,\ldots,b_n$ are in $E$
the stage $\psi(c_n,\sigma)$ as the first stage $s \geq s_0$ such that
$b_{1,s} = b_1$, $b_{2,s} = b_2$, $\ldots$, $b_{n,s} = b_n$ and
$$
   E_s \cap J_{n+2} = J_{n+2} - \{b_{m,s}: m \in \{1,2,\ldots,n\}
   \wedge \sigma(m) = 0\} - \{b_{n+2,s}\}.
$$
Note again that
$n,b_1,b_2,\ldots,b_n$ can be computed from $c_n$. Now
the $s = \psi(c_n,\sigma)$ satisfies that $b_{n+2,s} \leq b_{n+2}$
and hence $c_{n+1,s} = c_{n+1}$. This permits again to conclude
by the same Kolmogorov complexity arguments as in the case of the
set $A$ that $E$ is the union of $B$ and a finite set; hence
$B$ is a maximal left-r.e.\ set.
\end{proof}

\noindent
One might ask why we construct a maximal left-r.e.\ set instead of
checking whether some maximal r.e.\ set is also maximal as a left-r.e.\
set. Unfortunately this approach does not work, as the following result
shows.

\begin{thm} \label{thm: maxsep}
No r.e.\ set can be a maximal left-r.e.\ set.
\end{thm}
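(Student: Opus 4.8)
The plan is to argue that any maximal left-r.e.\ candidate which happens to be r.e.\ can be enlarged, modulo a finite set, by a left-r.e.\ set which is still coinfinite. The key observation is that an infinite r.e.\ set $A$ with infinite complement contains an infinite \emph{recursive} subset $R = \{r_0 < r_1 < r_2 < \dotsb\}$, and simultaneously the complement $\overline{A}$, being co-r.e., we only know to be infinite. So I would split into cases according to how $\overline{A}$ looks. If $\overline{A}$ contains an infinite recursive subset $S$, then $A \cup (S \setminus \{\text{every other element of } S\})$ is an r.e.\ (hence left-r.e.) set strictly between $A$ and $\mathbb{N}$ modulo finite differences, which already contradicts maximality of $A$ as a left-r.e.\ set; but of course $\overline{A}$ need not contain an infinite recursive subset, so this easy case does not suffice.

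The substantive case is when $\overline{A}$ is infinite but contains no infinite recursive set, i.e.\ $\overline{A}$ is \emph{immune}. Here the idea is to exploit the extra flexibility of left-r.e.\ enumerations: rather than enumerating elements \emph{into} a set from below, a left-r.e.\ approximation may revise large blocks of bits upward in the lexicographic order. Using the infinite recursive subset $R$ of $A$, I would build a left-r.e.\ set $E$ whose restriction to $\mathbb{N} \setminus R$ equals $A \restr (\mathbb{N}\setminus R)$, and whose restriction to $R$ is a nontrivial coinfinite left-r.e.\ pattern that is \emph{not} recursive relative to $A$ — for instance, coding $\Omega$ (or more simply, coding a single left-r.e.\ real that is not r.e.) onto the bits indexed by $R$, arranged lexicographically so that increasing approximations of $\Omega$ push $E_s$ upward. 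Then $E \supseteq A$ modulo a finite set is false in general, so I would instead take $E$ to agree with $A$ off $R$ and to contain a sparse-but-infinite coinfinite subset of $R$ chosen so that $A \subset^* E \subset^* \mathbb{N}$; because $R$ is recursive, both $E - A$ and $\mathbb{N} - E$ can be forced infinite, and the left-r.e.\ approximation of $E$ is obtained by letting the $R$-block follow a left-r.e.\ real while the off-$R$ part follows the r.e.\ enumeration of $A$ (which is automatically left-r.e.).

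The main obstacle is the lexicographic monotonicity constraint: enumerating a new element of $\overline{A} \cap (\mathbb{N}\setminus R)$ into $E$ would move the approximation \emph{down} at some bit that is never later corrected (since off $R$ we are committed to the r.e.\ set $A$), so all the enlargement must happen on the recursive coordinates $R$, and there one must still respect $E_s \leq_{\lex} E_{s+1}$ while keeping $\overline{E}$ infinite. The trick is to reserve, within $R$, infinitely many ``low-order'' positions that remain $0$ forever (guaranteeing $\mathbb{N} - E$ infinite) and infinitely many ``high-order'' positions that eventually become $1$ (guaranteeing $E - A$ infinite), and to drive the remaining positions by a left-r.e.\ real so that the total approximation is lexicographically nondecreasing. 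Once $E$ is seen to be left-r.e., coinfinite, and to satisfy $A \subset^* E \subset^* \mathbb{N}$, maximality of $A$ as a left-r.e.\ set fails, completing the proof; the immune case and the recursive-subset case together exhaust the possibilities for $\overline{A}$, and the finite and cofinite cases for $A$ are trivial since those are excluded by $A \subset^* \mathbb{N}$ and infinitude of $\overline{A}$.
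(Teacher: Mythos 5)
There is a genuine gap, and it sits exactly in the case your argument needs most. Your plan confines all of the enlargement of $A$ to an infinite recursive subset $R \subseteq A$: the set $E$ you describe agrees with $A$ off $R$ and differs from $A$ only at positions inside $R$. But every position in $R$ already carries a $1$ in $A$, so changing bits there can only \emph{delete} elements; whatever pattern you write on $R$, you get $E \subseteq A$, hence $E - A = \emptyset$. Your claim that ``both $E-A$ and $\mathbb{N}-E$ can be forced infinite'' is therefore false for $E-A$, and the construction can never produce a coinfinite left-r.e.\ set with $A \subset^* E$, which is what refuting maximality requires. The case split is also misaligned with the problem: your easy case (where $\overline{A}$ has an infinite recursive subset) never applies to the sets one actually worries about, since a maximal r.e.\ set has an immune complement, so the entire theorem rests on the case your construction cannot handle. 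Finally, the obstacle you identify --- that adding an element of $\overline{A}$ to $E$ ``moves the approximation down'' --- is backwards: turning a $0$ into a $1$ is a lexicographic \emph{increase}; the real difficulty is that one must sometimes turn a $1$ back into a $0$, and this must be paid for by a simultaneous increase at a smaller position.

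The paper's proof resolves exactly this last point with a construction you did not find. Given a one-element-per-stage enumeration $A_0, A_1, \ldots$ of $A$, list the complement of $A_s$ as $x_0 < x_1 < x_2 < \cdots$ and set $E_s = A_s \cup \{x_1, x_3, x_5, \ldots\}$, i.e.\ $A_s$ together with every other element of its \emph{current} complement. When a new element $x_n$ enters $A$, the parity of the later complement positions shifts, so some elements do drop out of $E$; but the least point of the symmetric difference between $E_s$ and $E_{s+1}$ (namely $x_n$ if $n$ is even, $x_{n+1}$ if $n$ is odd) always lies in $E_{s+1}$, so $E_s \leq_{\mathrm{lex}} E_{s+1}$ and $E = \lim_s E_s$ is left-r.e. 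In the limit $E$ contains $A$ and every second element of $\overline{A}$, giving $A \subset^* E \subset^* \mathbb{N}$ with no case distinction and no recursive subset of $\overline{A}$ needed. The moral is that the left-r.e.\ flexibility you correctly sensed should be exploited on the (dynamically approximated) complement of $A$, not on a recursive subset of $A$ itself.
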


\begin{proof}
Let $A$ be an infinite r.e.\ set. Without loss of generality
assume that exactly one new element gets enumerated into $A$ at each
stage of its recursive approximation $A_0, A_1, A_2, \dotsc$ and for
each $s$, let
$x_0, x_1, x_2, \dotsc$ denote the complement of $A_s$ in ascending
order and define
\[
E_s = A_s \union \{x_1, x_3, x_5, \dotsc\}.
\]
Now assume that there is a stage $s$ and $x_n \in A_{s+1} - A_s$
being the unique element enumerated into $A$ at stage $s$.
If $n$ is even, then
\[
E_{s+1} = E_s \union \{x_n, x_{n+2}, x_{n+4}, \dotsc\} - \{x_{n+1},
x_{n+3}, \dotsc\},
\]
and if $n$ is odd, then
\[
E_{s+1} = E_s \union \{x_{n+1}, x_{n+3}, \dotsc\} - \{x_{n+2},
x_{n+4}, \dotsc\}.
\]
In either case the minimum of the symmetric difference of $E_s$ and
$E_{s+1}$, which is $x_n$ when $n$ is even and $x_{n+1}$ when $n$ is
odd, belongs to $E_{s+1}$.  Hence $E_s \leq_\lex E_{s+1}$.  The
left-r.e.\ set $E = \lim E_s$ contains all elements of $A$ and every
second element of the complement of $A$, hence $A$ is not maximal in
the structure of the left-r.e.\ sets under inclusion.
\end{proof}

\noindent
A further interesting question is the following: For maximal r.e.\ sets
$C$ one has the property that there is no r.e.\ set $E$ with $E-C$
and $\overline{E}-C$ being infinite \cite[p.~187]{Soa87}. Do the
corresponding properties
also hold for minimal and maximal left-r.e.\ sets? That is, can one
make sure that no left-r.e.\ set splits a minimal left-r.e.\ set $A$
into two infinite parts or the complement of a maximal left-r.e.\ set
$B$ into two infinite parts? The answer is ``no''.

\begin{thm}
Let $A$ be an infinite left-r.e.\ set and $B$ be a coinfinite left-r.e.\
set. Then there is an infinite left-r.e.\ set $E$ such that
$A \cap E$ and $A \cap \overline{E}$ are both infinite. Furthermore
there is an infinite left-r.e.\ set $F$ such that
$\overline{B} \cap F$ and $\overline{B} \cap \overline{F}$ are both
infinite.
\end{thm}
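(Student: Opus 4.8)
The plan is to establish each half of the statement by a direct movable‑marker construction of the required left‑r.e.\ set, in the spirit of the proof of Theorem~\ref{thm: maxsep}; in both halves the only real difficulty is to keep the approximation of the set being built lexicographically non‑decreasing.

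\emph{The set $E$.} First I would peel off the trivial cases. If $A$ is cofinite, take $E$ to be the set of even numbers: then $A\cap E$ and $A\cap\overline E$ are cofinite in the evens, resp.\ the odds, hence infinite. If $A$ has an infinite recursive subset $R=\{r_0<r_1<\cdots\}$ (which may fail here, since an infinite left‑r.e.\ set need not have one), take $E=\{r_0,r_2,r_4,\dots\}$, a recursive, hence left‑r.e., set with $E\subseteq A$ and $R\setminus E\subseteq A\cap\overline E$ both infinite. In the remaining case $A$ is immune and coinfinite, and here one must work. Fix a lexicographically non‑decreasing recursive approximation $A_0,A_1,\dots$ of $A$; since $A$ is infinite, the $n$‑th element $a_{n,s}$ of $A_s$ converges to the $n$‑th element $a_n$ of $A$ for every $n$. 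The aim is to build $E$ so that, in the limit, it contains $a_n$ exactly when a ``selection marker'' attached to index $n$ is on, with infinitely many markers on and infinitely many off (this makes $A\cap E$ and $A\cap\overline E$ infinite). The obstruction is that a single lexicographically increasing step of the approximation of $A$ may flip a $0$ to a $1$ at some position $m$ while resetting all higher bits of $A$ to $0$; with the naive ``every other element'' recipe this would force the approximation of $E$ to fall lexicographically. I would handle this by the following device: whenever $A$ performs such a reset above $m$, also clear all bits of $E$ above $m$, switch on the selection marker at the index then occupied by $m$ and, if that marker was already on, plant instead a fresh $1$ in $E$ just above $m$ as the witness for the next omitted element; in either subcase the first bit of $E$ that changes moves from $0$ to $1$, so the step is lexicographically increasing. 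Because every bit of $A$ eventually settles, only finitely many resets ever touch positions below any fixed bound, so all markers and witnesses stabilise, $E$ is well defined, infinite and left‑r.e., and the bookkeeping delivers infinitely many selected and infinitely many omitted elements of $A$.

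\emph{The set $F$.} After the analogous trivial reductions (e.g.\ $B$ finite), work with a lexicographically non‑decreasing recursive approximation $B_0,B_1,\dots$ of $B$; since $B$ is coinfinite, the $n$‑th non‑element of $B_s$ converges to the $n$‑th non‑element of $B$. The target is to make $F$ agree, up to a finite set, with $B\cup\{\text{every second non-element of }B\}$, so that $F\cap\overline B$ and $\overline F\cap\overline B$ are the odd‑ and the even‑indexed non‑elements of $B$, both infinite. Note that $F$ is in general \emph{not} r.e., which is why Theorem~\ref{thm: maxsep} cannot be invoked verbatim and is exactly what makes the statement nontrivial: the complement of a maximal r.e.\ set is split into two infinite parts by no r.e.\ set, yet it is split by this left‑r.e.\ $F$. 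For an ordinary enumeration step of $B$ (one non‑element removed, nothing else altered) the verification that the approximation of $F$ increases lexicographically is the same computation as in the proof of Theorem~\ref{thm: maxsep}. The only new feature is again a ``reset'' step of $B$ above some position $m$, and it is absorbed by the same trick as for $E$: clear $F$ above $m$ and install one fresh $1$ just above $m$, so that the first changed bit of $F$ goes from $0$ to $1$; settling of the bits of $B$ gives settling of $F$, while the record‑keeping keeps $F\cap\overline B$ and $\overline F\cap\overline B$ infinite.

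The hard part, in both constructions, is precisely the lexicographic‑monotonicity requirement under ``reset'' steps of the given approximations; everything else is routine once the ``clear above $m$ and plant a fresh $1$'' device is in place, its correctness resting on the fact that the approximations of $A$ and of $B$ settle one bit at a time, so that markers installed below any fixed position are disturbed only finitely often.
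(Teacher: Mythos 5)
Your overall strategy --- put the even-indexed elements of $A$ into $E$, keep the odd-indexed ones out, and compensate for the lexicographic drops caused by ``reset'' steps of the approximation to $A$ by planting an auxiliary $1$ near the reset position --- is essentially the idea behind the paper's proof as well, but the step you yourself flag as ``the hard part'' is exactly where your argument has a hole. When the approximation of $A$ flips position $m$ from $0$ to $1$ and you respond by clearing $E$ above $m$ and writing a $1$ at $m$ (or ``just above'' $m$), you need the position you write to to have held a $0$ in $E_s$; otherwise nothing turns on at or near $m$, the first bit of $E$ that changes is one of the cleared bits above $m$, and the step is a lexicographic \emph{decrease}. Nothing in your construction guarantees this vacancy: the position $m$ or $m+1$ may already carry a $1$ in $E_s$, either because it is an element of $A_s$ whose marker is on, or because it is one of the ``fresh $1$'s'' you planted at an earlier reset --- and those junk bits also destroy the invariant $E_s\subseteq A_s$ that would otherwise force $E_s(m)=0$. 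The subcase distinction you do make (``if that marker was already on\dots'') tests the wrong thing: the state of the marker attached to the index now occupying $m$, not the content of $E_s$ at the position where you want to write. A secondary omission: the claim that ``the bookkeeping delivers infinitely many selected and infinitely many omitted elements of $A$'' is never argued; the rule as stated switches markers on at every reset, and without a precise rule one cannot exclude that cofinitely many markers end up on, which would make $A\cap\overline{E}$ finite.

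The paper eliminates the collision problem before any construction starts, and thereby avoids the dynamic bookkeeping altogether. Arguing by contradiction, it may assume that the set of even numbers fails to split $A$ and $\overline{B}$, hence (up to a finite modification and symmetry) that every element of $A$ and every non-element of $B$ is odd. It then simply sets $E=\{a_{2k},\,a_{2k+1}-1 : k\in\mathbb{N}\}$ and $\overline{F}=\{b_{2k},\,b_{2k+1}-1 : k\in\mathbb{N}\}$: the helper bit $a_{2k+1}-1$ is even, so it is guaranteed not to collide with any element of $A$ or with any other helper, and a short computation shows that $A_s\leq_\lex A_{s+1}$ implies $E_s\leq_\lex E_{s+1}$ for the induced approximation, while $A\cap E=\{a_0,a_2,\dots\}$ and $A\cap\overline{E}=\{a_1,a_3,\dots\}$ are visibly infinite. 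If you want to salvage your direct marker construction you need some mechanism that reserves provably vacant positions for the helper bits; the parity reduction is the cheapest such mechanism, and once you have it the marker machinery becomes unnecessary.
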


\begin{proof}
Assume by way of contradiction that $A$ and $B$ exist. Then the set
of even number neither splits $A$ nor the complement of $B$ into
two infinite halves; therefore without loss of generality, all
members of $A$ are odd and all non-members of $B$ are odd.

Let $A = \{a_0,a_1,a_2,\ldots\}$ and $\overline{B} = \{b_0,b_1,b_2,\ldots\}$
be denoted such that $a_k < a_{k+1}$ and $b_k < b_{k+1}$ for all $k$.
Now choose $E$ and $F$ such that
\begin{eqnarray*}
  E & = & \{a_{2k}, a_{2k+1}-1: k \in {\mathbb N} \} \mbox{ and } \\
  \overline{F} & = & \{b_{2k}, b_{2k+1}-1: k \in {\mathbb N} \}.
\end{eqnarray*}
One can obtain corresponding approximations $E_s$ and $F_s$ for $E$ and
$F$, respectively, by using analogous formulas to define $E_s$ from $A_s$
and $\overline{F_s}$ from $\overline{B_s}$. Fix left-r.e.\
approximations $A_s$ to $A$ with $A_s(2x)=0$ for all $x$ and $B_s$ to
$B$ with $B_s(2x) = 1$ for all $x$.  Then $A_s \leq_{\lex}
A_{s+1} \Rightarrow E_s \leq_{\lex} E_{s+1}$ and $B_s \leq_{\lex} B_{s+1}
\Rightarrow F_s \leq_{\lex} F_{s+1}$. Hence both sets $E$ and $F$
are left-r.e.\ sets.
Furthermore, $A \cap E = \{a_0,a_2,a_4,\ldots\}$,
$A \cap \overline{E} = \{a_1,a_3,a_5,\ldots\}$,
$\overline{B} \cap F = \{b_1,b_3,b_5,\ldots\}$
and $\overline{B} \cap \overline{F} = \{b_0,b_2,b_4,\ldots\}$.
Hence $E$ and $F$ meet the requirements.
\end{proof}

\noindent
Having established the fundamentals on minimal and maximal left-r.e.\
sets, the time is ready for the question which of them can be made
into themselves.

\begin{thm} \label{thm:lowerfarm}
There is a minimal left-r.e.\ set $A$ such that $\{A\}$ can be made
into itself. There is no maximal left-r.e.\ set $B$ such that
$\{B\}$ can be made into itself.
\end{thm}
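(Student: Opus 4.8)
The plan is to split the theorem into its positive and negative halves and to attack each via the singleton characterisation already available in Theorem~\ref{thm: bambam}, which says that a left-r.e.\ class $\{X\}$ can be made into itself iff $X\neq\emptyset$ and there is an infinite r.e.\ set disjoint from $X$. For the maximal case this gives the negative result almost for free: if $B$ is a maximal left-r.e.\ set, then by Definition~\ref{def:zulu} we have $B\subset^*\mathbb N$, so $\overline B$ is infinite, but no infinite r.e.\ subset of $\overline B$ can exist. Indeed, if $R$ were an infinite r.e.\ set with $R\cap B=\emptyset$, then $B\cup R$ would be a left-r.e.\ set (the union of a left-r.e.\ set with an r.e.\ set remains left-r.e., by the same lexicographic-monotonicity bookkeeping used in Theorem~\ref{thm:can't enumerate the non-r.e. sets}) lying strictly between $B$ and $\mathbb N$ modulo finite sets, contradicting maximality --- unless $R\subseteq^* B$, which is impossible since $R$ is disjoint from $B$. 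Hence no infinite r.e.\ set is disjoint from $B$, and Theorem~\ref{thm: bambam} forbids $\{B\}$ from being made into itself. (One must be slightly careful that ``$B\cup R$ is left-r.e.'' is argued cleanly: take the left-r.e.\ approximation $B_s$ of $B$ together with the enumeration of $R$ and observe that adding elements of $R$ only raises the approximation lexicographically, since each newly-enumerated element of $R$ lies outside the part of $B$ already fixed.)

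For the positive half I would revisit the construction of the minimal left-r.e.\ set $A$ from the proof of Theorem~\ref{thm:zulu} and check that it already furnishes the required infinite r.e.\ set disjoint from $A$. Recall that there $A=\{a_1,a_2,\dots\}$ with $a_n=\langle n,c_{n-1},2^{2^n}-1-d_n\rangle$ lying inside the interval $I_n$, and that $A$ contains at most one point of each $I_n$. So $\overline A$ contains, for instance, all of $I_n\setminus\{a_n\}$, and in particular contains the minimum element $\min(I_n)$ of each interval $I_n$ whenever $a_n\neq\min(I_n)$ --- which happens for all $n$, since $a_n$ has a nonzero middle coordinate $c_{n-1}$ for $n$ large (or one simply picks any fixed recursive point of $I_n$ other than the single possible location of $a_n$, noting the approximation $a_{n,s}$ moves only downward in its last coordinate, so $a_n$ never equals $\max(I_n)$, say). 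The set $R=\{\max(I_n):n\geq 1\}$ is then an infinite recursive, hence r.e., set disjoint from $A$, and $A\neq\emptyset$. Applying Theorem~\ref{thm: bambam} gives a universal left-r.e.\ numbering making $\{A\}$ into itself.

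The main obstacle is purely a matter of verifying that the particular point of each interval one reserves for $R$ genuinely avoids $A$ for \emph{every} $n$, not just cofinitely many; the cleanest route is to observe directly from the formula $a_n=\langle n,c_{n-1},2^{2^n}-1-d_n\rangle=\min(I_n)+c_{n-1}\cdot 2^{2^n}+(2^{2^n}-1-d_n)$ that the last coordinate $2^{2^n}-1-d_n$ is strictly less than $2^{2^n}-1$ precisely when $d_n>0$, and that when $d_n=0$ one has $a_n=\min(I_n)+c_{n-1}\cdot 2^{2^n}+(2^{2^n}-1)$, which is $\max(I_n)$ only if additionally $c_{n-1}=2^{2^n}-1$; ruling out that degenerate coincidence (or, more robustly, just using $R=\{\langle n,0,0\rangle:n\geq 1\}=\{\min(I_n)\}$ and noting $a_n=\min(I_n)$ would force $c_{n-1}=0$ and $d_n=2^{2^n}-1$, i.e.\ $\Omega$ has a very specific long block, which can hold for at most finitely many $n$ and those finitely many exceptions can be absorbed into the finite initial segment one is free to modify) settles it. Everything else is an immediate appeal to the two cited theorems.
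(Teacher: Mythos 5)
Your positive half is sound and is essentially the paper's argument: the minimal set $A$ of Theorem~\ref{thm:zulu} misses a fixed recursive point of each interval $I_n$ (the paper simply arranges the intervals large enough that $a_n \neq \max(I_n)$ for every $n$), so $A$ is disjoint from an infinite recursive set and Theorem~\ref{thm: bambam} applies. The negative half also follows the paper's route --- show that a maximal $B$ meets every infinite r.e.\ set and invoke Theorem~\ref{thm: bambam} --- but the one step that carries the real content, namely that $B\cup R$ is left-r.e., is exactly where your argument has a gap. The naive approximation $E_s=B_s\cup R_s$ is \emph{not} $\leq_\lex$-monotone, and the failure has nothing to do with when elements of $R$ appear: if $B_s$ increases to $B_{s+1}$ with least changed position $y\in B_{s+1}-B_s$ and $y$ already lies in $R_s$, then $y$ is invisible in the union while the compensating removals of larger elements of $B_s$ remain visible, so $E_{s+1}<_\lex E_s$. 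Concretely, $B_s=\{5\}$, $B_{s+1}=\{3\}$, $R_s=\{3\}$ gives $E_s=000101\ldots$ and $E_{s+1}=000100\ldots$, and this is compatible with $R\cap B=\emptyset$ in the limit. Your parenthetical repair (``each newly-enumerated element of $R$ lies outside the part of $B$ already fixed'') is unjustified, since the enumeration of $R$ is independent of the approximation of $B$, and it does not address this failure mode anyway.

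The paper closes this gap by first replacing $R$ with an infinite \emph{recursive} set (harmless, as every infinite r.e.\ set has an infinite recursive subset) and then approximating $B\cup R$ only at stages $s_t$ chosen so that $B_{s_t}\cap\{0,1,\ldots,t\}$ is already disjoint from $R$; the sets $E_t=(B_{s_t}\cap\{0,\ldots,t\})\cup R$ are then genuinely $\leq_\lex$-monotone because the offending coincidences cannot occur. (An alternative repair: since $B\cap R=\emptyset$, the real $.B+.R$ equals $.(B\cup R)$ and is a sum of left-r.e.\ reals, and an infinite set whose real has a nondecreasing dyadic approximation is a left-r.e.\ set.) You should also handle the case that $B\cup R$ is cofinite: your ``unless $R\subseteq^* B$'' only guarantees $B\subset^* B\cup R$, not $B\cup R\subset^*{\mathbb N}$. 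The paper disposes of this by observing that then $B$ is a coinfinite recursive set and hence not maximal; alternatively, shrink $R$ so that $B\cup R$ stays coinfinite.
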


\begin{proof}
One can easily see that the intervals $I_n$ in Theorem~\ref{thm:zulu} can be
chosen large enough so that $a_n \neq \max(I_n)$ for all $n$; hence
$A = \{a_0, a_1, \dotsc\}$ is disjoint from an infinite recursive set
and so $\{A\}$ can be made into itself by Theorem~\ref{thm: bambam}.

Assume now that $B$ is a maximal left-r.e.\ set; one has to show that there
is no infinite recursive set $R$ disjoint from $B$. Assume the contrary and
without loss of generality $R \cup B$ is coinfinite (otherwise $B$ is the
complement of a recursive set and not maximal).  Let $B_0,B_1,\ldots$
be a left-r.e.\ approximation of $B$.
Now one can select a sequence $s_0,s_1,\ldots$ of stages such
that $B_{s_t} \cap \{0,1,\ldots,t\}$ is disjoint from $R$.
Hence $E_t = (B_{s_t} \cap \{0,1,\ldots,t\}) \cup R$ is a
recursive left-r.e.\ approximation of $B \cup R$ which then
witnesses that $B$ was not, as assumed, a maximal left-r.e.\ set.
Hence there is no infinite recursive set disjoint to $B$ and,
by Theorem~\ref{thm: bambam}, $\{B\}$ cannot be made into itself.
\end{proof}

\noindent
The next result shows that each of the classes of minimal left-r.e.\ sets
and maximal left-r.e.\ sets cannot be made into itself; the proof method
is to show that the corresponding index-sets cannot be $K'$-recursive
and therefore cannot be left-r.e., let alone minimal or maximal.
\begin{thm}
Neither the class of minimal left-r.e.\ sets nor the class of maximal
left-r.e.\ sets can
be made into itself.
\end{thm}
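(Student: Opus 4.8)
The plan is to show that for \emph{every} universal left-r.e.\ numbering $\beta$ the index set $J=\{e:\beta_e\in\C\}$ fails to be $K'$-recursive, where $\C$ is taken in turn to be the class of minimal and the class of maximal left-r.e.\ sets. The easy half is the observation that a left-r.e.\ set $A$ with lex-monotone recursive approximation $(A_s)$ has $A(n)=\lim_s A_s(n)$: every lexicographic increase flips some bit from $0$ to $1$, and below a fixed position this happens only finitely often, so each bit settles and $A$ is $\Delta^0_2$, hence $K'$-recursive. Thus a minimal or maximal left-r.e.\ set is $K'$-recursive, and once $J$ is known not to be $K'$-recursive we are done, since then $J$ cannot even be left-r.e., let alone minimal or maximal left-r.e.

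For the lower bound I would reduce a $\Pi^0_3$-complete set---for concreteness $\{e:W_e\text{ is coinfinite}\}$---to $J$, which already forces $J\notin\Delta^0_3$. Building on the construction of Theorem~\ref{thm:zulu}, I would set up, uniformly in $e$, a two-mode construction of a left-r.e.\ set $\gamma_e$ in a numbering $\gamma$ that we build ourselves: one mode runs the $\Omega$-style interval-and-block construction and yields a minimal (respectively maximal) left-r.e.\ set, while the other builds $\gamma_e$ so as to contain a fixed infinite recursive set (respectively to have one in its complement), hence a transparently non-minimal (non-maximal) set. The mode is to be steered by the progress of the $\Pi^0_3$ predicate ``$W_e$ coinfinite'', and designing this steering so that $\gamma_e\in\C$ holds exactly when $W_e$ is coinfinite is the content of the reduction; the outcome is $\{e:\gamma_e\in\C\}=\{e:W_e\text{ coinfinite}\}$.

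The step I expect to be the real obstacle is transporting this down to the \emph{given} $\beta$: a universal left-r.e.\ numbering need not be acceptable, so there is no recursive---indeed no $K'$-recursive---translation from $\gamma$ into $\beta$, and the $\Pi^0_3$-hardness cannot simply be pulled back. My intended workaround is to argue with $\beta$ directly in the spirit of Corollary~\ref{cor: pigpong}, exploiting the \emph{assumed} properties of $J$. Being $\Delta^0_2$, $J$ has a lex-monotone recursive approximation $(M_s)$ and hence an r.e.\ overapproximation $M^+=\{e:(\exists s)\,M_s(e)=1\}\supseteq J$; being not r.e.\ (an infinite r.e.\ set is never minimal left-r.e., since an infinite r.e.\ subset contains an infinite recursive one and hence a left-r.e.\ proper subset modulo a finite set, while no r.e.\ set is maximal left-r.e.\ by Theorem~\ref{thm: maxsep}), it has $M^+\setminus J$ infinite; and $J$ is immune in the minimal case and has immune complement in the maximal case, by the argument inside Theorem~\ref{thm:lowerfarm}. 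Following $\beta_e$ until $e$ first enters the enumerable set $M^+$ and then committing once and for all to a fixed non-minimal (non-maximal) left-r.e.\ set---with an auxiliary left-r.e.\ enumeration padded onto fresh indices to restore the coverage lost on the infinite set $M^+\setminus J$---should produce a left-r.e.\ numbering of a class provably without one: the non-r.e.\ left-r.e.\ sets in the minimal case, impossible by Theorem~\ref{thm:can't enumerate the non-r.e. sets}, and the coinfinite left-r.e.\ sets in the maximal case, impossible as recalled in the introduction. The technical core is keeping this modification $\leq_\lex$-monotone despite the oscillation of the approximation to $J$ (which forces the ``commit and never revert'' discipline) and verifying that the padding genuinely repairs coverage; it is here that the finer structure of the two classes must be used, and it is the point I am least sure goes through cleanly.
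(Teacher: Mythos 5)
Your reduction of the theorem to showing that the index set $J=\{e:\beta_e\in\C\}$ is not $K'$-recursive is exactly the paper's strategy, and your justification (left-r.e.\ sets are $\Delta^0_2$, hence $K'$-recursive) is correct. The gap lies in the step you yourself flag as ``the real obstacle.'' Your claim that there is no $K'$-recursive translation from an auxiliary numbering $\gamma$ into the given universal numbering $\beta$ is simply false, and seeing this is what makes the proof go through: for left-r.e.\ sets presented by uniformly recursive lex-monotone approximations, the equality $\gamma_e=\beta_d$ is a $\Pi^0_2$ predicate (both sides are limits of recursive approximations), hence decidable in $K'$, and universality of $\beta$ guarantees that the search for the least such $d$ terminates. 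So any uniform construction of a left-r.e.\ set $\tilde A_e$ that is minimal (resp.\ maximal) iff $W_e$ is cofinite yields a $K'$-recursive many-one reduction of the cofiniteness index set to $J$, and since that index set is $\Sigma^0_3$-complete and hence not $K'$-recursive, neither is $J$. The paper obtains such $\tilde A_e$ directly from the sets of Theorem~\ref{thm:zulu} --- splitting each $x\in A$ into $\{3x\}$ or $\{3x+1,3x+2\}$ according to whether $\ind(x)\in W_e$, and taking $B\oplus W_e$ in the maximal case --- with no $\Pi^0_3$-steered two-mode machinery required; your two-mode construction is in any case only sketched, not carried out.

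The workaround you substitute for the translation step also fails on its own terms. Committing every index of the r.e.\ overapproximation $M^+\supseteq J$ to a fixed non-minimal set leaves a numbering whose range is a subclass of the \emph{non-minimal} left-r.e.\ sets, and this is not the class of non-r.e.\ left-r.e.\ sets: it contains all r.e.\ and all recursive left-r.e.\ sets, so Theorem~\ref{thm:can't enumerate the non-r.e. sets} yields no contradiction. Likewise the non-maximal left-r.e.\ sets include $\mathbb N$ and every cofinite set, so they are not the coinfinite left-r.e.\ sets. The device of Corollary~\ref{cor: pigpong} works only because the complement of the r.e.\ sets within the left-r.e.\ sets happens to be a class provably without a numbering; nothing analogous holds for minimality or maximality, and the ``padding to restore coverage'' you invoke would require enumerating exactly the non-minimal sets whose only indices lie in $M^+\setminus J$, which is circular.
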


\begin{proof}
Let $A$ be the minimal and $B$ be the maximal left-r.e.\ set from
Theorem~\ref{thm:zulu}. Recall that $I_1,I_2,\ldots$ is a recursive partition
of the natural numbers such that $A$ has exactly one element in $I_n$
for each $n$. Let $\ind(x)=n$ for the unique $n$ with $x \in I_n$; the
function $\ind$ is recursive.  We show that with respect to any
universal left-r.e.\ numbering $\alpha$, neither the minimal nor the
maximal left-r.e.\ sets can be made into itself.

Let $P$ be the index set of the minimal left-r.e.\ sets in $\alpha$. 
Now consider for any r.e.\ set $W_e$ the set $\tilde A_e$ given as
$$
    \{3x: x \in A \wedge \ind(x) \in W_e\} \cup
                \{3x+1,3x+2: x \in A \wedge \ind(x) \notin W_e\}.
$$
One can easily see that $\tilde A_e$ has a left-r.e.\ approximation;
starting with a left-r.e.\ approximation $A_s$ for $A$ and an enumeration
$W_{e,s}$ for $W_e$, the approximation $\tilde A_{e,s}$ is the same as
for $\tilde A_e$ except $A$ is replaced with $A_s$ and $W_e$ is
replaced with $W_{e,s}$.

If $W_e$ is cofinite then the set $\tilde A_e$ is a finite variant of
$\{3x: x \in A\}$ and thus minimal; if $W_e$ is coinfinite then the set
$\tilde A_e$ has an infinite left-r.e.\ subset which has infinitely many
less elements than $\tilde A_e$, namely
$$
   \{3x: x \in A \wedge \ind(x) \in W_e\} \cup
                \{3x+1: x \in A \wedge \ind(x) \notin W_e\}.
$$
There is a $K'$-recursive mapping which determines for every $e$ the
least index $d$ with $\alpha_d = \tilde A_e$; now $d \in P$ iff
$W_e$ is cofinite. As the set $\{e: W_e$ is cofinite$\}$ is not
$K'$-recursive in any acceptable numbering of the
r.e.\ sets \cite[Corollary~IV.3.5]{Soa87}, $P$ cannot be $K'$-recursive and
therefore is not a minimal left-r.e.\ set.

Now let $Q$ be the index set of the maximal left-r.e.\ sets in the given
enumeration $\alpha$. Recall that $B$ is a fixed maximal left-r.e.\ set.
Now each join $B \oplus W_e$ is left-r.e.\ and is a maximal
left-r.e.\ set iff $W_e$ is cofinite. Again there is a $K'$-recursive mapping
which finds for each $e$ an index $d$ with $B \oplus W_e = \alpha_d$;
hence one can, relative to $K'$, many-one reduce the index set of the
cofinite sets to $Q$. As the index set of the cofinite sets is not
$K'$-recursive, $Q$ also cannot be $K'$-recursive; hence $Q$ cannot
be left-r.e.\ and in particular is not a maximal left-r.e.\ set.
\end{proof}

\section{Inclusion}

\noindent
We now turn our attention to the question of which things can be directly stuck
inside other things.  Kummer \citep{Kum10} showed that there exists a
numbering $\phe$ of the partial recursive sets such that the r.e.\
\emph{inclusion problem},
\[
\INC_\phe = \{\pair{i,j}: W_i^\phe \subseteq W_j^\phe\},
\]
is recursive in the halting set and asked whether there exists a
numbering $\phe$ of the partial recursive sets such that $\INC_\phe$
is r.e. Kummer's question remains open, however in the context of
left-r.e.\ sets we show the answer is
negative.  Below we use $\INC_\alpha$ to denote the left-r.e.\
inclusion problem.

\begin{thm} \label{thm:left-r.e. inclusion}
For every universal left-r.e.\ numbering $\alpha$,
\begin{enumerate}[\scshape (i)]
\item $\INC_\alpha$ is not r.e.\ and
\item $\INC_\alpha \geq_\T K$.
\end{enumerate}
\end{thm}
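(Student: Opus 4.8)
The plan is to prove both parts by exhibiting, for an arbitrary universal left-r.e.\ numbering $\alpha$, a uniformly left-r.e.\ family of pairs of sets whose inclusion relation encodes the halting problem (or its complement), and then to observe that the reduction machinery also forbids $\INC_\alpha$ from being r.e. For (ii), the natural move is a many-one reduction from $\overline{K}$ to $\INC_\alpha$: given an index $e$, I would build a left-r.e.\ set $U_e$ and take $V_e$ to be a fixed simple witness (say, $\emptyset$, or a one-point set) so that $U_e \subseteq V_e$ holds precisely when $e \notin K$. Concretely, let $U_e$ follow the empty set until $e$ enters $K$, at which point $U_e$ dumps in a fresh element above everything seen so far; this keeps $U_e$ left-r.e.\ (it is a finite set approximated from below, with the single element appearing only once $e\in K$ is confirmed). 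Since $\alpha$ is universal, compute indices $i(e),j(e)$ with $\alpha_{i(e)}=U_e$ and $\alpha_{j(e)}=\emptyset$; then $\pair{i(e),j(e)}\in\INC_\alpha \iff e\notin K$. Hence $\overline{K}\leq_m \INC_\alpha$, so $\INC_\alpha \geq_\T K$.

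For part (i), suppose toward a contradiction that $\INC_\alpha$ were r.e. Then the above reduction would make $\overline{K}$ r.e., which is false — so this already gives (i) as a corollary of the construction in (ii), provided the reduction is genuinely $m$-reduction from $\overline{K}$ (not just Turing). So the two parts can be proved together: the single construction yields $\overline{K}\leq_m \INC_\alpha$, from which $\INC_\alpha\geq_\T K$ and $\INC_\alpha$ not r.e.\ both follow immediately, since $\overline{K}$ is neither recursive nor r.e.

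The one point requiring care — and the place I expect the real work to be — is verifying that the sets $U_e$ can be produced \emph{uniformly} as members of the given numbering $\alpha$ with the left-r.e.\ monotonicity condition $U_{e,s}\leq_{\lex}U_{e,s+1}$ genuinely met, and that universality of $\alpha$ supplies the indices $i(e)$ effectively (this is exactly what universality of a left-r.e.\ numbering buys us: every left-r.e.\ numbering $\beta$ factors through $\alpha$ via a recursive $f$; here $\beta_e = U_e$, which is plainly a left-r.e.\ numbering since enumerating a single late element keeps the approximation lexicographically increasing). A secondary subtlety is the choice of the ``dump'' element: it must be placed \emph{above} the current stage so that adding it strictly increases the approximation in $\leq_{\lex}$ order, and one must also make sure the companion set $\alpha_{j(e)}$ is literally a superset in the relevant case and a proper non-superset otherwise — using $\emptyset$ for $\alpha_{j(e)}$ makes $U_e\subseteq\emptyset$ equivalent to $U_e=\emptyset$, i.e.\ to $e\notin K$, which is exactly what we want. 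Everything else is routine bookkeeping.
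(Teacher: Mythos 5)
There is a genuine gap, and it sits exactly at the point you flagged as "the place I expect the real work to be." Your reduction requires computing, from $e$, indices $i(e)$ and $j(e)$ with $\alpha_{i(e)}=U_e$ and $\alpha_{j(e)}=\emptyset$, and you justify this by saying that universality means every left-r.e.\ numbering $\beta$ factors through $\alpha$ via a recursive $f$. That is the definition of an \emph{acceptable} left-r.e.\ numbering, not a \emph{universal} one: universality only says that the range of $\alpha$ contains every left-r.e.\ set, with no effectivity whatsoever in locating indices. The theorem quantifies over all universal numberings, including Friedberg (one-one) numberings of the left-r.e.\ sets, for which no recursive translation function exists. So your map $e\mapsto\pair{i(e),j(e)}$ is not recursive, the claimed $m$-reduction $\overline{K}\leq_m\INC_\alpha$ does not go through, and both parts of your argument collapse with it. (Your auxiliary sets $U_e$ are fine --- adding a single fresh large element once $e$ enters $K$ is indeed a left-r.e.\ process --- the problem is purely in getting $\alpha$-indices for them.)

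The paper's proof is built precisely to avoid this obstacle. It fixes two concrete left-r.e.\ sets, $A$ the odd numbers and $B=\{2x:x\in K\}\cup\{2x+1:x\notin K\}$, and then, assuming $\INC_\alpha$ is r.e.\ (or using it as an oracle for part~(\textsc{ii})), it \emph{searches} over all indices for some left-r.e.\ set $E$ and stage $s$ such that the pairs witnessing $E\subseteq A$ and $E\subseteq B$ have already appeared in the enumeration of $\INC_\alpha$ and $E_s$ has the right pattern on odd numbers below $2x$. Universality guarantees only that a suitable witness (a finite set) exists somewhere in the numbering; the enumeration of $\INC_\alpha$ itself is what certifies that the search has found one, and the conjunction $E\subseteq A$ and $E\subseteq B$ freezes $E$ below $2x$ so that $K$ can be read off $E_s$. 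If you want to salvage your approach you would need a similar search-and-verify mechanism in place of the direct index computation; as written, the proof only establishes the theorem for acceptable numberings.
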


\begin{proof}
For part~(\textsc{i}), define the following two sets:
\begin{align*}
 A &= \text{ the set of odd numbers}, \\
 B &= \{2x: x \in K\} \union \{2x+1: x \notin K\}.
\end{align*}
Note that $A \intersect B = \{2x+1: x \notin K\}$ and that $A$ and $B$
are both left-r.e.: the characteristic function of $B$ on $2x,2x+1$
changes from $01$ to $10$ whenever $x$ goes into $K$, hence this is a
left-r.e. process.

Let $\alpha$ be a universal left-r.e.\ numbering and suppose that
$\INC_\alpha$ were r.e.  For each number $x$, we show how to decide
membership in the set $\{y \in K: y < x\}$.  We search for a
left-r.e.\ set $E$ and a number $s$ such that the following has
happened up to stage $s$:
\begin{itemize}
\item The indices for $E \subseteq A$ and $E \subseteq B$ have both
been enumerated into the inclusion problem;

\item for all $y < x$, either $y \in K$ or $2y+1 \in E_s$ but not both.
\end{itemize}
Note that $E$ cannot acquire any further element $2z+1 < 2x$ after stage~$s$ as
then $2z+1\in B$ which implies $z \notin K$, contrary to the second
item above.  Hence $E$ does not change below $2x$ after stage $s$ and
therefore one knows for all $y < x$ that $y\in K$
iff $y \in E_s$.  An $\alpha$-index for such a set $E$ exists as every
finite set has an index in $\alpha$, and therefore our search
terminates.  The recursive algorithm just described thus decides the
halting problem, which is impossible.

For part~(\textsc{ii}), note that instead of searching for enumerations of the
inclusion problem, one can run the above algorithm relative to the inclusion
problem and so show that $K$ is Turing reducible to the inclusion problem
with that algorithm.
\end{proof}

\noindent
We leave the following open questions for the left-r.e.\ inclusion problem:

\begin{ques}
Does there exist a numbering $\alpha$ for the left-r.e.\ sets such
that $\INC_\alpha \equiv_\T K$?  In particular, can we make
$\INC_\alpha$ to be left-r.e.?
\end{ques}

\noindent
Consider the related relation
\[
\LEX_\alpha = \{\pair{i,j} : \alpha_i \leq_\lex \alpha_j\}.
\]
Any Friedberg numbering $\alpha$ makes $\LEX_\alpha$ recursive in the
halting set.  The reason is that no two distinct indices in a
Friedberg numbering represent the same left-r.e.\ set, so a halting
set oracle suffices to find a sufficiently long prefix which reveals
the lexicographical order of the strings.  We can improve this result to
a numbering such that the left-r.e.\ relation itself becomes left-r.e.

\begin{thm} \label{gazebo}
There exists a universal left-r.e.\ numbering $\alpha$ such that
$\LEX_\alpha$ is an r.e.\ relation.
\end{thm}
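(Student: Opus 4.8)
The plan is to build the numbering $\alpha$ explicitly, arranging that every left-r.e.\ set appears among the $\alpha_e$ while simultaneously guaranteeing that the lexicographic-order relation between indices can be certified by a finitary, positive event. The central device is to attach to each index $e$ a ``sign'' bit $\tau(e) \in \{0,1\}$, recursively computable from $e$, and to let $\alpha_e$ be a left-r.e.\ set living inside one of two disjoint ``regions'' of $\mathbb N$ determined by $\tau(e)$. Concretely, fix a recursive partition $\mathbb N = Z_0 \cup Z_1$ into two infinite recursive sets whose characteristic strings begin $0^\infty$-ishly in a controlled way, and design $\alpha_e$ so that its characteristic sequence starts with the block $0\,\tau(e)$ followed by a coded copy, pushed into $Z_{\tau(e)}$, of the $(e \bmod \text{something})$-th left-r.e.\ set under some fixed acceptable left-r.e.\ enumeration $\beta$. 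The point of the leading bit $\tau(e)$ is that it already decides $\leq_\lex$ for most pairs: if $\tau(i) < \tau(j)$ then $\alpha_i <_\lex \alpha_j$ outright (position $1$ already separates them, since position $0$ is fixed at $0$ for everyone), and this fact is recognizable at a finite stage.

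First I would set up the enumeration so that universality is immediate: I use a fixed acceptable left-r.e.\ numbering $\beta$ (which exists by Definition~\ref{def:left-r.e.-numbering}) and ensure that for each left-r.e.\ set $\beta_k$ there are indices $e$ with $\tau(e)=0$ and indices $e$ with $\tau(e)=1$ whose associated set is a faithful, order-preserving encoding of $\beta_k$ into the region $Z_{\tau(e)}$; since the encoding is a recursive, lex-monotone transformation of the approximation $\beta_{k,s}$, each $\alpha_e$ is genuinely left-r.e.\ and the numbering satisfies conditions (i) and (ii). Second, I would handle the cross-region pairs $(i,j)$ with $\tau(i)\ne\tau(j)$: exactly one direction of $\leq_\lex$ holds and it is decided immediately, so such pairs (or their reverses) can be put into $\LEX_\alpha$ at stage $0$ with no further monitoring. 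Third, and this is where the work concentrates, I would handle the same-region pairs $\tau(i)=\tau(j)$: here $\alpha_i \leq_\lex \alpha_j$ iff the underlying encoded sets satisfy the corresponding relation, and I must show this can be enumerated.

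The hard part will be enumerating the same-region comparisons, because deciding $\alpha_i \leq_\lex \alpha_j$ when the two sets are genuinely being approximated from below is in general a $\Sigma^0_2$ question, not r.e.\ --- the lex-sup can wobble. The way I would defeat this is to make the encoding into $Z_{\tau(e)}$ \emph{eventually-self-certifying}: arrange that each $\alpha_e$'s approximation, after a recursively-detectable point, exhibits a ``marker'' configuration in its restriction to a reserved recursive sub-block of $Z_{\tau(e)}$ that pins down a prefix long enough to reveal $\alpha_e$'s order-type relative to all $\alpha_{e'}$ with $e' \le e$ in the same region; this is possible because $\leq_\lex$ among the finitely many indices $\le e$ is a linear preorder and only finitely many prefix-length thresholds matter. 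Then ``$\alpha_i \leq_\lex \alpha_j$'' becomes: ``there is a stage $s$ at which both markers have appeared and $\alpha_{i,s} \restr n \le_\lex \alpha_{j,s}\restr n$ for the relevant threshold $n$,'' an r.e.\ event. I would verify that no false positives arise (once the markers are up, the revealed prefix is correct and the order is locked) and that every true instance is eventually caught, and conclude that $\LEX_\alpha$ is r.e.; universality having been secured in the first step, this completes the construction.
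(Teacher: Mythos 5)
There are two genuine gaps. First, universality fails by construction: you force every $\alpha_e$ to begin with the bits $0\,\tau(e)$ and to carry only an \emph{encoded copy} of $\beta_k$ pushed into the region $Z_{\tau(e)}$. The range of $\alpha$ then consists of encodings of left-r.e.\ sets, not the sets themselves; in particular no set containing $0$, and no set meeting $Z_{1-\tau(e)}$, ever occurs as some $\alpha_e$. The theorem requires the range of $\alpha$ to literally include every left-r.e.\ set, so this already sinks the construction. (The cross-region trick is also doing no real work: it only decides pairs with $\tau(i)\neq\tau(j)$, and the full difficulty survives untouched inside each region.)

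Second, and more fundamentally, the ``eventually-self-certifying marker'' step is exactly the hard part of the theorem and it cannot work as described. You need a recursively detectable stage after which $\alpha_{i,s}\restr n$ is guaranteed to equal $\alpha_i\restr n$ for a threshold $n$ large enough to separate $\alpha_i$ from $\alpha_j$. But two distinct left-r.e.\ sets can agree on arbitrarily long prefixes, so $n$ is unbounded over the pairs you must handle, and for a nonrecursive left-r.e.\ set there is no recursively detectable stage by which a given prefix has settled (otherwise the set would be recursive); the settling time is only $K$-computable. Your soundness claim (``once the markers are up, the revealed prefix is correct and the order is locked'') is asserted, not arranged: you provide no mechanism that acts when the approximated order later flips. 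That repair mechanism is the entire content of the paper's proof, which starts from a Friedberg numbering omitting $\mathbb N$ and, whenever the approximation threatens to push a pair $\pair{i,j}$ out of $\LEX_\alpha$, \emph{changes} $\alpha_j$ by obliterating it to $\mathbb N$ (the $\leq_\lex$-maximum, so the pair stays in) while spawning a fresh follower for the displaced $\beta$-index to preserve universality. Without some device that modifies the sets to validate enumerated pairs, the same-region comparison remains a genuinely $\Sigma^0_2$ question and your enumeration of $\LEX_\alpha$ will have false positives.
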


\begin{proof}
Let $\beta$ be a Friedberg left-r.e.\ numbering which includes indices
for all the left-r.e.\ sets except for $\mathbb N$.  We define a
universal left-r.e.\ numbering $\alpha$ based on $\beta$ as follows. 
Informally, during the first $s$~stages, $\alpha$ follows the first
$s$~indices of $\beta$ for $s$~computation steps, and some finitely
many other $\alpha$-indices $e$ have been defined to be $\alpha_e  =
\mathbb N$.  If $\alpha_e = {\mathbb N}$, we say that the index $e$
has been \emph{obliterated}.  We describe stage~$s+1$.  For each pair
$\pair{i,j}$ with $i<j$ where $\beta_i$ becomes lexicographically
larger than $\beta_j$ at stage~$s+1$, that is, $\beta_{i,s} \leq_\lex
\beta_{j,s}$ but $\beta_{i,s+1} >_\lex \beta_{j,s+1}$, the index for
the $\alpha$-follower of $\beta_j$ and all larger defined
$\alpha$-indices are obliterated and a new $\alpha$-follower for
$\beta_j$ and each of the other newly obliterated indices is
established.  Also in stage~$s+1$, an $\alpha$-follower for
$\beta_{s+1}$ is established so that in the end each $\beta$-index
will have a unique $\alpha$-index following it.  Note that only
finitely many $\alpha$-indices are defined in any given stage.

For every $e$, the $\alpha$-index following $\beta_e$ eventually
converges once sufficiently much time has passed to allow the
approximation of $\beta_e$'s prefix to differ from the approximation
of every lesser $\beta$-index's prefix and also enough time that these
prefixes never again change.  Furthermore, obliterating indices can
only ever increase membership of the respective set, so $\alpha$ is a
universal left-r.e.\ numbering.  Finally, $\alpha$ is r.e.\ because
whenever $\beta$'s enumeration tries to push $\pair{i,j}$ out of
$\LEX_\alpha$, the index $j$ gets obliterated and hence $\pair{i,j}$
stays inside $\LEX_\alpha$.
\end{proof}

\section*{Summaries}
\hyphenation{Selbst-be-zugs-eigen-schaf-ten}

\noindent
\begin{CJK}{GB}{gbsn}
\bf{关于可以成为自己一员的性质.}
\end{CJK}\begin{CJK}{GB}{gbsn}
我们说一个自然数的性质P可以成为自己的一员，指的是所有左递归集有一个编码使得满足性质P的指标集也具有性质P。 例如，Martin-Lof随机性质就可以变成自己的一员。 在此，我们刻画所有可以成为自己一员的单元集性质。我们接着研究，有限同余情况下，左递归集所组成的类在包含关系下的结构。这种结构不仅有极大元而且有极小元。相比而言，相应的递归集所组成的类只有极大元没有极小元。 而且，我们构造左递归集的极大元和极小元的方法与经典的Friedberg关于递归类的极大元的方法有很大不同。最后，本文研究极大和极小左递归集的性质是否可以变成自己的一员。
\end{CJK} 

\medskip

\noindent
{\bf A\^\j oj kiojn oni povas meti en si mem.}
Aro $A$ estas rekursive enumerabla se $A$ estas la limo de uniforme rekursivaj
aroj $A_0,A_1,\ldots$ je kiuj $A_n \subseteq A_{n+1}$ por \^ciu $n$;
$A$ estas maldekstre rekursive enumerabla se $A$ estas la limo de uniforme
rekursivaj aroj $A_0,A_1,\ldots$ je kiuj $A_n \leq_{lex} A_{n+1}$ por
\^ciu $n$.
La publika\^\j o temas pri la sekvanta afero: Se $\alpha_0,\alpha_1,\ldots$
estas numerado da maldekstre rekursive enunmerablaj aroj kaj se
$P$ estas abstrakta eco de aroj (kiel esti Martin-L\"of hazarda),
tiam oni konsideru la indeksa aro $\{e: \alpha_e$ havas econ $P\}$. Oni diras
ke oni povas meti la $P$ en si mem se ekzistas numerado
$\alpha_0,\alpha_1,\ldots$ de {\^c}iuj maldekstre rekursive enumerablaj
aroj tiel ke la indeksa aro por $P$ je tiu numerado anka{\v u}
havas la econ $P$. En tiu-\^ci publika\^\j o estas diversaj teoremoj
kiuj diras je multaj famaj ecoj el teorioj pri rekursivaj funkcioj kaj
algoritmika hazardo se oni povas meti tiujn ecojn en si mem.
Ekzemple, oni povas meti la Martin-L\"of hazarda arojn en si mem.
Plue, se la aro $A$ havas minimume unu membron kaj estas maldekstre
rekursive enumerabla, tiam oni povas meti
la econ $P(X)$ dirante $X = A$ en si mem ekzakte se ekzistas malfinia
rekursive enumerebla aro $B$ kiu havas malplenan komuna\^\j on kun $A$.
Oni anka{\v u} esploras pri minimumaj kaj maksimumaj aroj en la
strukturo de maldekstre rekursive enumerablaj aroj je la ordo $\subseteq^*$.
Kvankam en la mondo de rekursive enumerablaj aroj la minimuma aroj ne ekzistas,
amba{\v u} ekzistas en la mondo de maldekstre rekursive enumerablaj aroj
kaj la pruvo malsimilas al tiu de Friedberg por la mondo de rekursive
enumerablaj aroj.

\medskip
\noindent
{\bf Dinge die in sich selbst gemacht werden k\"onnen.}
Eine Menge $A$ nat\"urlicher Zahlen heisst rekursiv aufz\"ahlbar (r.a.)
genau dann wenn es eine uniform-rekursive Folge $A_0,A_1,\ldots$ gibt
welche punktweise gegen $A$ konvergiert und $A_n \subseteq A_{n+1}$
f\"ur alle $n$ erf\"ullt; $A$ heisst links-r.a.\ genau dann wenn es eine
uniform-rekursive Folge $A_0,A_1,\ldots$ gibt welche punktweise gegen $A$
konvergiert und $A_n \leq_{lex} A_{n+1}$ f\"ur alle $n$ erf\"ullt.
Das Thema der Arbeit ist der folgende Selbstbezug: Man sagt dass eine
Eigenschaft $P$ von Mengen nat\"urlicher Zahlen
in sich selbst gemacht werden kann wenn es eine
Numerierung $\alpha_0,\alpha_1,\ldots$ aller links-r.a.\ Mengen gibt so dass
die Index-Menge $\{e: \alpha_e$ hat die Eingenschaft $P\}$ ebenfalls die
Eigenschaft $P$ hat. Es wird untersucht, welche bekannten
rekursions-theoretischen Eigenschaften diese Art von Selbstbezug haben,
zum Beispiel hat die Eigenschaft ``Martin-L\"of zuf\"allig'' einen solchen
Selbstbezug. Man kann auch die Eigenschaft $P$ betrachten wo $P(X)$
bedeutet dass $X = A$ ist f\"ur eine feste gegebene nichtleere
links-r.a.\ Menge $A$.
Nun hat $P$ die obenerw\"ahnte Art von Selbstbezug genau dann
wenn $A$ zu einer unendlichen rekursiv aufz\"ahlbaren
Menge $B$ disjunkt ist. Desweiteren wurde die Struktur der
links-r.a.\ Mengen mit der partiellen Ordnung $\subseteq^*$ untersucht.
Es wird gezeigt dass es in dieser Struktur, anders als im Fall der r.a.\ Mengen,
nicht nur maximale sondern auch minimale links-r.a.\ Mengen gibt; die
Konstruktion ist recht unterschiedlich von der Konstruktion welche Friedberg
im r.a.\ Fall benutzte. Desweiteren werden die Selbstbezugseigenschaften
von minimalen und maximalen links-r.a.\ Mengen untersucht.

\section*{Acknowledgments}
\noindent
The authors would like to thank Randall Dougherty for pointing out
that the class of finite sets cannot be made into itself. We thank Chunlai Zhou for writing the Chinese summary for us.

\end{document}